\journal{Information and Computation}
\newtheorem{theorem}{Theorem}
\newtheorem{lemma}[theorem]{Lemma}
\newtheorem{proposition}[theorem]{Proposition}
\newtheorem{corollary}[theorem]{Corollary}
\newtheorem{definition}[theorem]{Definition}
\newtheorem{example}{Example}
\newtheorem{remark}{Remark}
\newproof{proof}{Proof}
\newcommand*{\DEBUG}{}%
\newcommand{\fixme}[1]{{\textcolor{red}{\bf{\textsf{FIXME: #1}}}}}
\newcommand{\bug}[1]{{\textcolor{blue}{\bf{\textsf{BUG: #1}}}}}
\newcommand{\idea}[1]{{\textcolor{blue}{\bf{\textsf{IDEA: #1}}}}}
\newcommand{\TODO}[1]{{\textcolor{red}{\bf{\textsf{ 
TODO: #1
}}}}}
\newcommand{\fixme}[1]{}
\newcommand{\bug}[1]{}
\newcommand{\TODO}[1]{}
\newcommand{\idea}[1]{}
\newclass{\COMSLIP}{COM\mbox{-}SLIP}
\newclass{\COMSLIPCUP}{COM\mbox{-}SLIP^{\cup}}
\newclass{\DCM}{DCM}
\newclass{\eDCM}{eDCM}
\newclass{\eNPDA}{eNPDA}
\newclass{\DPDA}{DPDA}
\newclass{\RDPDA}{RDPDA}
\newclass{\PDA}{PDA}
\newclass{\DCMNE}{DCM_{NE}}
\newclass{\TwoDCM}{2DCM}
\newclass{\NCM}{NCM}
\newclass{\eNCM}{eNCM}
\newclass{\eNQA}{eNQA}
\newclass{\eNSA}{eNSA}
\newclass{\eNPCM}{eNPCM}
\newclass{\eNQCM}{eNQCM}
\newclass{\eNSCM}{eNSCM}
\newclass{\DPCM}{DPCM}
\newclass{\NPCM}{NPCM}
\newclass{\NQCM}{NQCM}
\newclass{\NSCM}{NSCM}
\newclass{\NPDA}{NPDA}
\newclass{\TRE}{TRE}
\newclass{\NFA}{NFA}
\newclass{\DFA}{DFA}
\newclass{\NCA}{NCA}
\newclass{\DCA}{DCA}
\newclass{\DTM}{DTM}
\newclass{\NTM}{NTM}
\newclass{\DLOG}{DLOG}
\newclass{\CFG}{CFG}
\newclass{\ETOL}{ET0L}
\newclass{\EDTOL}{EDT0L}
\newclass{\CFP}{CFP}
\newclass{\ORDER}{O}
\newclass{\MATRIX}{M}
\newclass{\BD}{BD}
\newclass{\LB}{LB}
\newclass{\ALL}{ALL}
\newclass{\decLBD}{decLBD}
\newclass{\StLB}{StLB}
\newclass{\SBD}{SBD}
\newclass{\TCA}{TCA}
\newclass{\UFIN}{\LL(IND_{UFIN})}
\newclass{\UFINONE}{\LL(IND_{{UFIN}_1})}
\newclass{\FIN}{\LL(IND_{FIN})}
\newclass{\ILIN}{\LL(IND_{LIN})}
\newclass{\ETOLfin}{\LL(ET0L_{FIN})}
\newsavebox{\spacebox}
\newcommand{\LL}{{\cal L}}
\newcommand{\N}{\mathbb{N}}
\begin{document}

\begin{frontmatter}

\title{On Finite-Index Indexed Grammars
and Their Restrictions
\tnoteref{t1}}

\tnotetext[t1]{\textcopyright 2022. This manuscript version is made available under the CC-BY-NC-ND 4.0 license \url{http://creativecommons.org/licenses/by-nc-nd/4.0/} The manuscript is published in F. D'Alessandro, O.H. Ibarra, I. McQuillan. On finite-index indexed grammars and their restrictions. {\it Information and Computation} 279, 104613 (2021).}

\author[label1]{Flavio D'Alessandro\fnref{fn1}\corref{corr}}
\address[label1]{Department of Mathematics\\ Sapienza University of Rome,
00185 Rome, Italy
 \\ and		\\
  Department of Mathematics, 
Bo\u gazi\c ci University \\
  34342 Bebek, Istanbul, Turkey}
\ead[label1]{dalessan@mat.uniroma1.it}
\cortext[corr]{Corresponding author}
\fntext[fn1]{Supported by EC-FP7 Marie-Curie/T\"UBITAK/Co-Funded Brain Circulation Scheme Project 2236 (Flavio D'Alessandro).}

\author[label2]{Oscar H. Ibarra\fnref{fn2}}
\address[label2]{Department of Computer Science\\ University of California, Santa Barbara, CA 93106, USA}
\ead[label2]{ibarra@cs.ucsb.edu}
\fntext[fn2]{Supported, in part, by
NSF Grant CCF-1117708 (Oscar H. Ibarra).}

\author[label3]{Ian McQuillan\fnref{fn3}}
\address[label3]{Department of Computer Science, University of Saskatchewan\\
Saskatoon, SK S7N 5A9, Canada}
\ead[label3]{mcquillan@cs.usask.ca}
\fntext[fn3]{Supported, in part, by Natural Sciences and Engineering Research Council of Canada Grant 2016-06172 (Ian McQuillan).}

\begin{abstract}
The family, $\ILIN$, of languages generated
by linear indexed grammars has been 
studied in the literature.
It is known that the Parikh image of every language in $\ILIN$
is semi-linear.  However, there are bounded semi-linear languages that
are not in $\ILIN$.  Here, we look at larger families of (restricted)
indexed languages and study their combinatorial and decidability properties, and their relationships.
\end{abstract}

\begin{keyword}Indexed Languages \sep Finite-Index \sep Full Trios \sep Semi-linearity \sep Bounded Languages \sep \ETOL $\,$ Languages
\end{keyword}

\end{frontmatter}

\section{Introduction}
\label{sec:intro}

Indexed grammars \cite{A,A1} are a natural generalization of context-free grammars, where variables keep stacks of indices.
Although they are included in the context-sensitive languages, 
the languages generated by indexed grammars are quite broad as they contain some non semi-linear languages.
Several restrictions have been studied that have desirable computational properties.
Linear indexed grammars were first created, restricting the number of variables on the right hand side to be at most one \cite{DP}.
Other restrictions include another system named exactly linear indexed grammars \cite{Gazdar1988} (see also \cite{Vijay-Shanker1994}), which are different than
the first formalism, although both are sufficiently restricted to only generate semi-linear languages. In this
paper, we only examine the first formalism of linear indexed grammars.

We study indexed grammars that are restricted to be
finite-index, which is a generalization of linear indexed grammars \cite{DP}. 
Such grammars generate languages that inherit several properties satisfied by context-free languages $\CFL$.
Grammar systems that are {\em $k$-index}
are restricted so that, for every word generated by the grammar,
there is some successful derivation where at most $k$ variables (or nonterminals) appear in 
every sentential form of the derivation \cite{DP1989,C3.6,RS,RozenbergFiniteIndexGrammars}. A system is finite-index if it is $k$-index for some $k$.
It has been found that that when restricting many different types of grammar systems to be finite-index, their languages coincide. This is the case for finite-index $\ETOL$, $\EDTOL$, context-free programmed
grammars, ordered grammars, and matrix context-free grammars.

We introduce the family,
$\FIN$, of languages generated by finite-index
indexed grammars and a sub-family, $\UFIN$,
of languages generated by uncontrolled finite-index
indexed grammars, where every successful derivation has to be
finite-index. 
The grammars generating the languages of $\UFIN$ have been very recently studied under the name of
{\em breadth-bounded grammars}, and it was shown that
this family is a semi-linear full trio. We also study a special case of the latter, called
$\UFINONE$ that restricts branching productions. 
We then show the following:
\begin{enumerate}
\item
All families are full trios.

\item The semi-linearity property of $\UFIN$ and $\UFINONE$ is extended to a bigger family, showing, more generally,  that, 
if $\cal C$ is an arbitrary full trio of semi-linear languages and $\LL(\NCM)$ is the family of languages accepted by 
one-way deterministic reversal-bounded multicounter machines, then every language in the family 
$$  \{L_1 \cap L_2  : L_1 \in {\cal C}, L_2 \in \LL(\NCM)\},$$
 has a semi-linear Parikh image.
%
%

 \item The following conditions are equivalent for a bounded language $L$:
\begin{itemize}
\item $L \in \UFINONE$,
\item $L \in \UFIN$,
\item $L$ is bounded semi-linear,
\item $L$ can be generated by a finite-index $\ETOL$ system,
\item $L$ can be accepted by a DFA augmented with reversal-bounded counters,
\end{itemize}
\item Every finite-index $\ETOL$ language is in $\FIN$,
\item
$\CFL\ \subset \  \ILIN\ \subset \UFINONE  \subseteq \UFIN \subset \FIN$, 

\item Containment and equality are decidable for bounded languages in $\ILIN$ and $\UFIN$.
\end{enumerate}

\section{Preliminaries}
\label{sec:prelims}

We assume a basic background in formal languages and automata theory \cite{Be,Gins,C3.6,C4.8}.

Let $k$ be a positive integer and let $\N^k$ be the additive free commutative monoid of $k$-tuples of non negative integers. If $B$ is a subset of $\N^k$, $B^\oplus$ denotes the submonoid of 
  $\N^k$ generated by $B$.

An {\em alphabet} is a finite set of symbols, and given an alphabet $A$,
$A^*$ is the free monoid generated by $A$. An element $w \in A^*$
is called a {\em word}, the empty word is denoted by $\lambda$, 
and any $L \subseteq A^*$ is a {\em language}. 
The {\em length} of
a word $w\in A^*$ is denoted by $|w|$, and the number of $a$'s, $a \in A$, in $w$
is denoted by $|w|_a$, extended to subsets $X$ of $A$ by 
$|w|_X = \sum_{a \in X} |w|_a$.

Let $A = \{a_1, \ldots, a_t\}$ be an alphabet of $t$ letters, and let $\psi: A^* \rightarrow \N ^t$ be the corresponding {\em Parikh morphism}
defined by $\psi(w) = (|w|_{a_1}, \ldots, |w|_{a_t})$.

A set $B \subseteq \N^k$ is a {\em linear set} if there exist
vectors ${\bf b}_0, {\bf b}_1 , \ldots, {\bf b}_{n}$ of $\N^k$ such that 
 $B = {\bf b}_0 + \{{\bf b}_1 , \ldots, {\bf b}_{n} \}^\oplus.$
 Further, $B$ is
called a {\em semi-linear set} if $B = \bigcup _{i=1} ^m B_i,  m\geq 1,$
for linear sets $B_1, \ldots, B_m$.
A language $L \subseteq A^*$ is said to be {\em semi-linear} if the
Parikh morphism applied to $L$ gives a semi-linear set. A language family
is said to be semi-linear if all languages in the family are semi-linear.
Many known families are semi-linear, such as the regular languages and 
context-free languages  (denoted by $\CFL$, see \cite{Be,Gins,C3.6,C4.8}),
and finite-index $\ETOL$ languages $\ETOLfin$, see \cite{RS,RozenbergFiniteIndexETOL}).

A language $L$ is termed {\em  bounded} if there exist non-empty 
 words $u_1, \dots, u_k$, with $k\geq 1$,  such that $L \subseteq u_1 ^*\cdots u_k^*$.
  Let $\varphi : \N ^k \rightarrow  u_1^*\cdots u_k^*$
be the map defined as: for every tuple $(\ell_1, \ldots, \ell_k)\in \N ^k$, $$\varphi  (\ell_1, \ldots, \ell_k) = u_1 ^ {\ell_1}\cdots u_k^ {\ell_k}.$$
The map $\varphi$ is called the {\em Ginsburg map}.
   \begin{definition}\label{+}
  A bounded  language $L\subseteq u_1^* \cdots u_k^*$ is said to be {\em bounded Ginsburg semi-linear}  if  there exists a semi-linear set 
  $B$ of $\N^k$  such
that $\varphi (B) = L$.
\end{definition}
In the literature, bounded Ginsburg semi-linear has also been called just bounded semi-linear, but we will use the terminology bounded Ginsburg semi-linear henceforth in this paper.

A {\em full trio} is a language family closed under morphism,
inverse morphism, and intersection with regular languages \cite{Be}.

We will also relate our results to the languages accepted by
one-way nondeterministic reversal-bounded multicounter machines
(denoted by $\LL(\NCM)$), and to
one-way deterministic reversal-bounded multicounter machines
(denoted by $\LL(\DCM)$. 
These are NFAs (DFAs) augmented by a set of counters that can switch between increasing and decreasing a fixed number of times \cite{BB1974,Ibarra1978}).

\section{Restrictions on Indexed Grammars}
\label{sec:indexed}

We first recall the definition of indexed grammar introduced in  \cite{A}  by following \cite{C4.8},   Section 14.3 (see also \cite{DP1989} for a reference book for grammars). 
 \begin{definition}\label{def0}
  An indexed grammar is a 5-tuple $G = (V, T, I, P, S)$, where 
\begin{itemize}
\item $V, T, I$ are finite pairwise disjoint sets: the set of variables, terminals, and indices, respectively;
\item $P$ is a finite set of  productions of the forms
$$
\mbox{\bf 1)\ } A \rightarrow \nu, \quad \mbox{\bf 2)\ } A \rightarrow Bf, \quad \mbox{or}\quad 
\mbox{\bf 3)\ } Af \rightarrow \nu,$$
where $A, B\in V,$ $f\in I$ and $\nu \in (V\cup T)^*$;
 \item $S\in V$ is the start variable.
\end{itemize}
\end{definition}
 Let us now define the derivation relation $\Rightarrow_G$ of $G$. Let $\nu$ be an arbitrary sentential form of $G$,
 $$u_1 A_1 \alpha_1 u_2 A_2 \alpha_2 \cdots u_k A_k \alpha_k u_ {k+1},$$ 
 with $A_i\in V, \alpha _i \in I^*, u_i\in T^*.$
 For a sentential form $\nu ' \in  (VI^* \cup T)^*$, we set $\nu \Rightarrow _G \nu '$ if one of the following three conditions holds:
\begin{description}
\item[ 1)] In $P$, there exists a production  of the form (1) $A \rightarrow w_1 C_1  \cdots w_{\ell} C_\ell w_ {\ell+1}$, $C_j \in V,
w_j \in T^*$, 
 such that in the sentential form $\nu$, for some  $i$ with $1\leq i\leq k,$ one has $A_i =A$ and 
$$\nu' = u_1 A_1 \alpha_1 \cdots u_{i} (w_1 C_1\alpha_i  \cdots w_{\ell} C_\ell \alpha_iw_ {\ell+1}) u_{i+1} A_{i+1}\alpha_{i+1} \cdots u_k A_k \alpha_k u_ {k+1}.$$

\item[ 2)] In $P$,  there exists a production  of the form (2)  $A\rightarrow Bf$ such that in the sentential form $\nu$, 
for some  $i$ with $1\leq i\leq k,$ one has $A_i =A$  and 
$\nu '  = u_1 A_1 \alpha_1 \cdots u_{i} (B f \alpha_i) u_{i+1} A_{i+1}\alpha_{i+1} \cdots u_k A_k \alpha_k u_ {k+1}.$

\item[ 3)] In $P$, there exists a production  of the form (3) $A f\rightarrow w_1 C_1  \cdots w_{\ell} C_\ell w_ {\ell + 1}$, $C_j \in V,
w_j \in T^*$,  
 such that in the sentential form $\nu$, for some  $i$ with $1\leq i\leq k,$ one has $A_i =A$, $\alpha_i = f\alpha'_i, \alpha'_i\in I^*$,  and
$$\nu  ' = u_1 A_1 \alpha_1 \cdots u_{i} (w_1 C_1\alpha'_i  \cdots w_{\ell} C_\ell \alpha'_iw_ {\ell+1}) u_{i+1} A_{i+1}\alpha_{i+1}  \cdots u_k A_k \alpha_k u_ {k+1}.$$
In this case, one says that the index $f$ is consumed.
 \end{description}

 For every $n\in \N$, $\Rightarrow_G ^n$ stands for the $n$-fold product of $\Rightarrow_G$ and $\Rightarrow_G ^*$ stands for the reflexive and transitive 
closure of $\Rightarrow_G$. The language $L(G)$ generated by  $G$ is the set $L(G) = \{u\in T^* : S \Rightarrow_G ^* u\}.$
 \bigskip

\noindent
{\bf Notation and Convention.}
{\em  In the sequel we will adopt the following notation and conventions for an indexed grammar $G$.
\begin{itemize}
   \item  If no ambiguity arises, the relations 
  $\Rightarrow_G, $ $\Rightarrow_G ^n, \ n\in \N$, and $\Rightarrow_G ^*$ will be simply denoted by  $\Rightarrow$,
  $\Rightarrow^n$, and $\Rightarrow^*$, respectively. 
 \item capital letters as $A, B,...etc$ (as well as its indexed variant)  will  denote variables of $G$. 
 
 \item the small letters $e, f$, as well as $f_i$, will be used to denote
 indices while $\alpha$, $\beta$ and $\gamma$, as well as its indexed variant (as for instance $\alpha_i$), will denote arbitrary words over $I$.
 \item Small letters as $a, b, c, ... etc$ (as well as its indexed variant) will  denote   letters of $T$ and 
  small letters as $u, v, w,r..., etc$ (as well as its indexed variant) will  denote  words over $T$.
  \item $\nu$ and $\mu$, as well as $\nu_i$ and $\mu_i$,  will denote arbitrary sentential forms of $G$.
 \item in order to shorten the notation, according to Definition \ref{def0}, if $p$ is a production of $G$ of the form (1) or (3), we will simply write
 $$Af \rightarrow \nu, \quad f\in I \cup \{\lambda\},$$
where it is understood that if $f=\lambda$,  the production $p$ has form (1) and if $f\in I$, the production $p$ has form (3). 

\item  \medskip

\noindent
If $ p \in P$ is a production of $G$,  then $\ \mu \Rightarrow _{p} \nu$ denotes  the $1$-step derivation of $G$ defined by $p$;

\item  \medskip

\noindent
  a derivation of $G$ of the form
$\ \nu_0 \Rightarrow _{p_1} \nu_1 \Rightarrow _{p_2} \cdots  \Rightarrow _{p_{n}} \nu_{n}$ will be also
shortly denoted as $\ \nu_0 \Rightarrow _ {p_1 \cdots p_n}  \nu_n$.

\end{itemize} }

The following set of definitions defines the main objects studied in this paper.
Let $G$ be an indexed grammar and let $L(G)$ be the language generated by $G$. 
The first definition is from \cite{DP}.
 \begin{definition}\label{eq-def-1}
We say that $G$ is  {\em linear} if  the right side component of every production of $G$ has at most one variable. 
A language $L$  is said to be {\em linear indexed} if there exists a linear indexed grammar $G$ such that $L=L(G)$.
\end{definition}

 \begin{definition}\label{eq-def-2}
Given an integer $k\geq 1$, a derivation  
$\nu_0 \Rightarrow \nu_1 \Rightarrow \cdots \Rightarrow \nu_n$  of $G = (V,T,I,P,S)$, is said to be of 
{\em index-$k$} if $|\nu_i|_V \leq k$, for all $i$, $0 \leq i \leq n$.
\end{definition}
 \begin{definition}\label{eq-def-3}
Given an integer $k\geq 1$,
$G$ is said to be {\em of index-$k$} if, for every word $u\in L(G)$,
there exists a derivation of $u$ in $G$ of index-$k$.

A language $L$ is said to be an {\em indexed language of index-$k$} if there exists an indexed
 grammar $G$ of index-$k$ such that $L=L(G)$.  
  An indexed language $L$ is said to be {\em  of finite-index} if $L$ is of index-$k$, for some $k$. 
 \end{definition}
 \begin{definition}\label{eq-def-4}
An indexed grammar  $G$ is said to be {\em uncontrolled} index-$k$ if,
for every derivation $\nu_0 \Rightarrow \cdots \Rightarrow \nu_n$ generating
$u \in L(G), |\nu_i|_V \leq k$, for all $i$, $0 \leq i \leq n$. $G$ is
uncontrolled finite-index if $G$ is uncontrolled index-$k$, for some $k$.
A language $L$ is said to be an {\em uncontrolled finite-index indexed language} if there exists an uncontrolled finite-index
 grammar $G$ such that $L=L(G)$.
  \end{definition}
  \begin{remark}
  It is worth noticing that, according to Definition \ref{eq-def-3}, if $G$ is a grammar of index-$k_1$, then $G$ is a grammar of index-$k_2$, for every
  integer $k_1\leq k_2$.
  \end{remark}

    \begin{remark}
   It is interesting to observe that Definition \ref{eq-def-4} corresponds, in the case of context-free grammars, 
  to the definition of  {nonterminal bounded grammar} (cf \cite{C3.6},   Section 5.7).  We recall that nonterminal bounded grammars
   are equivalent to {ultralinear grammars} and thus provide a characterisation of the family of languages that are accepted by Finite-Turn pushdown automata. 
 \end{remark}
   Finally let us denote by
 \begin{itemize}
 \item $\FIN$ the family of finite-index indexed languages;
 \item $\UFIN$ the family of uncontrolled finite-index indexed languages;
\item $\ILIN$ the family of linear indexed languages.

\end{itemize}

Uncontrolled finite-index grammars have been studied under the name of  
breadth-bounded indexed grammars in \cite{ICALP2015}, where the following result has been proved.
 
\begin{theorem}
$\UFIN$ is a semi-linear full trio.
\end{theorem}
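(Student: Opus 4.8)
The plan is to establish the two defining properties separately: that $\UFIN$ is closed under the three full-trio operations, and that every language in $\UFIN$ has a semilinear Parikh image. Throughout I fix an uncontrolled index-$k$ grammar $G=(V,T,I,P,S)$ with $L(G)=L$. The guiding principle for the closure part is that a construction keeps $G$ breadth-bounded precisely when it never increases the number of variable occurrences present in a sentential form, so the whole game is to realise each operation by decorating or relabelling $G$ rather than by spawning new concurrent variables.

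For the full-trio closure I would prove the single stronger statement that $\UFIN$ is closed under rational (finite-state) transductions, since intersection with a regular set, application of a morphism, and application of an inverse morphism are all rational transductions, and closure under rational transductions is equivalent to the full-trio property by a classical characterization (see \cite{Be}). Given a finite transducer $M=(Q,T,\Delta,\delta,q_0,F)$ reading over the terminals $T$ of $G$, I would form a product grammar whose variables are triples $[p,A,q]$ with $p,q\in Q$, whose indices are left unchanged, and where $[p,A,q]$ is read as: the subderivation from $A$ yields a contiguous factor $x$ of the generated word, and $M$ driven on $x$ passes from state $p$ to state $q$. In each new production I substitute for the terminal blocks of $G$ the strings that $M$ emits while threading states left to right across the right-hand side; type-(2) steps $A\to Bf$ emit nothing and merely pass the state pair through as $[p,A,q]\to[p,B,q]f$, while type-(1) and type-(3) steps are threaded exactly as in the Parikh/triple construction for context-free grammars. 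This threading is consistent because the yield of every variable occurrence is a contiguous factor of the final word, listed in the same order as the variable occurrences in the sentential form. The decisive point is that each occurrence $A\alpha$ of an old sentential form corresponds to exactly one occurrence $[p,A,q]\alpha$ of a new one, and a right-hand side with $\ell$ variables is replaced by a right-hand side again having $\ell$ variables; hence every derivation of the product grammar projects, by erasing the state decorations, onto a genuine derivation of $G$ with the same number of variables in each sentential form. Since $G$ is uncontrolled index-$k$, so is the product grammar, giving closure under rational transductions and therefore the full-trio property.

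The harder half is semilinearity, which I would obtain by an iteration (pumping) argument on finite-index derivations, using bounded breadth to extract a finite set of period vectors. Consider a derivation $S=\nu_0\Rightarrow\cdots\Rightarrow\nu_n$ of $u$ with every $|\nu_i|_V\le k$. To each step I associate an abstract configuration recording the length-$\le k$ sequence of variables currently present together with only a bounded \emph{exposed} portion of each index stack, namely the part that may still be consumed later, discarding the unbounded buried portions. If $u$ is long enough, a pigeonhole/Ramsey argument over the $n$ steps produces indices $i<j$ with equal abstract configuration such that the segment $\nu_i\Rightarrow^*\nu_j$ neither pops below an exposed level nor deletes an active thread it did not itself create; such a segment can then be spliced in arbitrarily often, each repetition contributing the same fixed Parikh vector. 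Collecting the finitely many minimal such segments over all abstract configurations yields finitely many period vectors, while the residual derivations that admit no further shortening are bounded in length and hence finite in number, contributing finitely many base vectors; together these exhibit $\psi(L)$ as a finite union of linear sets.

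The crux, and the step I expect to be most delicate, is making the abstract configuration finite while still guaranteeing that a repeated segment yields a \emph{fixed} (bounded) Parikh increment. This is exactly where bounded breadth is essential: the non-semilinear behaviour of general indexed grammars stems from the type-(1) rule copying an index stack among several children, which lets the number of active threads, and hence the size of a single pump, grow without bound (as for $\{a^{2^n}\}$, whose natural indexed grammar has breadth $2^n$). With at most $k$ threads ever present, a single repetition of a pumpable segment can touch only a bounded amount of structure, so the per-repetition increment is one of finitely many vectors; legitimising the repetition in the presence of unbounded stacks by matching the exposed stack segments at equal levels, in the style of the pumping lemmas for indexed languages, is the real work, well beyond the bookkeeping of the product construction.
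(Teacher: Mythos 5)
Your full-trio half is sound and is essentially the route the paper takes: Lemma~\ref{lm:3} is exactly the triple construction $\langle p,A,q\rangle$ you describe (states threaded left-to-right through right-hand sides, index stacks untouched, variable counts per sentential form preserved, hence the uncontrolled index is preserved), and Lemma~\ref{lm:2-trans} then obtains rational transductions via the decomposition $\tau(L)=\pi_{T'}(\pi_T^{-1}(L)\cap R)$ rather than a direct product with a transducer. That decomposition also sidesteps the one wrinkle in your version: a transducer with $\lambda$-input output loops yields infinitely many candidate productions $\langle p,A,q\rangle f\rightarrow\cdots$ unless you add auxiliary output-generating variables, which you would then have to argue do not raise the index.

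The semilinearity half has a genuine gap, and it is not merely a delicate detail. (Note that the paper does not prove semilinearity at all: it identifies uncontrolled finite-index grammars with the breadth-bounded indexed grammars of \cite{ICALP2015} and imports the result.) Your pumping argument, as sketched, never uses the \emph{uncontrolled} hypothesis at the splicing step: a spliced derivation is a valid derivation of $G$ whatever its index, so the pumped words lie in $L(G)$, and the unpumped short derivations remain index-$k$. Hence the identical argument would show that every language in $\FIN$ is semilinear, which is false by Example~\ref{ex:1} and Corollary~\ref{notsemilinear} (the index-$3$ grammar for $\{aba^2b\cdots a^nba^{n+1}\}$). So at least one of your two key claims must fail, and that example shows which: in a sufficiently long bounded-breadth derivation there need not exist a segment satisfying your side conditions (every candidate segment there spawns a thread that pops its entire inherited stack, far below any bounded ``exposed'' window), and if you weaken the conditions to make such a segment exist, the repetition does not contribute a fixed Parikh vector --- an index $f$ pushed onto a surviving thread is later \emph{copied} to every thread cloned from it by type-(1) productions, so each repetition costs one extra popping step on each of unboundedly many future threads (each extra $f$ on $X'$ in Example~\ref{ex:1} adds one $a$ to every subsequent $A$-thread), and the splice may in any case be invalid once a later step pops into stack material your matching did not control. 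Repairing this requires pairing each pushed segment with all of its branching-dependent pop segments, which is exactly where the breadth-bounded hypothesis must do real work and where your sketch stops.
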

(Makoto Kanazawa has pointed out to the authors that Georg Zetzsche's result -- the Parikh image of every language in $\UFIN$ is semilinear -- 
can be also obtained as corollary of a result proved in his paper \cite{K2014}).

The family  $\ILIN$
 has been  introduced  in \cite{DP} where results of algebraic and combinatorial
nature characterize the structure of its languages. Recall that a linear indexed grammar $G$ is said to be 
{\em right linear indexed} if, according to Definition \ref{def0}, in every production $p$ of $G$ of the form (1) 
or (3), the right hand component $\nu$ of $p$  has the form $\nu = u$, or $\nu = uB$, where $u\in T^*, B\in V$. 
In \cite{A} (see also \cite{DP}), the following theorem has been proved:
 \begin{theorem} \label{thm-A} If $L$ is an arbitrary language, $L$ is context-free if and only if there exists 
 a right linear indexed grammar $G$ such that $L=L(G)$.
\end{theorem}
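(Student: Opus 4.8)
The plan is to prove the equivalence by passing through pushdown automata, exploiting the fact that a right linear indexed grammar is essentially just a notation for an $\NPDA$: the single variable plays the role of the finite-state control, the index word attached to it plays the role of the pushdown store, and the terminal prefix produced so far plays the role of the consumed input. First I would record the basic structural invariant. Starting from $S$, every sentential form of a right linear indexed grammar $G$ is either a terminal word or has the shape $w A \alpha$ with $w \in T^*$, $A \in V$ and $\alpha \in I^*$. This is immediate by induction on the length of the derivation, since each admissible production shape, namely $A \to u$, $A \to uB$ of form (1), $A \to Bf$ of form (2), and $Af \to u$, $Af \to uB$ of form (3), with $u \in T^*$, replaces the unique variable by a word containing at most one variable and only appends terminals to its left. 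In particular $|\nu|_V \le 1$ throughout, so every right linear indexed grammar is linear, and indeed of index-$1$.

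For the direction ($\Leftarrow$), given such a $G$ I would construct an $\NPDA$ $M$ whose states are the variables $V$ together with a fresh accepting state, whose stack alphabet is $I$, and whose moves mirror the productions: $A \to uB$ becomes a move from $A$ to $B$ that reads $u$ and leaves the stack untouched; $A \to Bf$ becomes a move from $A$ to $B$ that reads nothing and pushes $f$; $Af \to uB$ becomes a move from $A$ to $B$ that pops $f$ and reads $u$; and each terminating production $A \to u$ or $Af \to u$ becomes a move into the accepting state that reads $u$ (popping $f$ in the second case). Multi-letter reads $u$ are split into single-letter transitions through fresh intermediate states in the usual way. Acceptance is by final state, ignoring the remaining stack contents, which matches the fact that applying a terminating production deletes the variable together with its whole index store. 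A routine induction using the invariant above shows that $S \Rightarrow^* w A \alpha$ holds iff $M$ has a computation reading $w$ that reaches state $A$ with stack contents $\alpha$ (top symbol first), whence $L(M) = L(G)$ and $L(G)$ is context free.

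For the direction ($\Rightarrow$), I would start from an $\NPDA$ $M$ for the context-free language $L$ and first put it in the standard normal form in which every move reads at most one input letter and performs exactly one stack operation, that is, pushes a single symbol, pops a single symbol, or leaves the store unchanged, accepting by final state. Reversing the dictionary of the previous paragraph then yields a right linear indexed grammar: states become variables, stack symbols become indices, a push/pop/neutral move becomes a production of form (2)/(3)/(1) respectively, the start variable encodes the initial configuration, and each final state $q$ contributes a terminating production $q \to \lambda$, which erases the variable and the whole remaining index store and thereby reflects acceptance by final state regardless of stack. The same induction, read in reverse, gives $L(G) = L(M) = L$.

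The only genuinely delicate points are bookkeeping rather than conceptual: one must ensure that the acceptance conventions on the two sides agree, since in the grammar the removal of a variable silently discards its entire attached index word, which corresponds precisely to accepting by final state with arbitrary remaining stack; and one must carry out the $\NPDA$ normalisation (single-letter reads, unit stack operations) before translating, because a right linear indexed production alters the index store by at most one symbol. Once these conventions are fixed, the correspondence is a bijection on moves and the two inductions are immediate. Combining the two directions with the classical characterisation $\LL(\NPDA) = \LL(\CFL)$ then yields the theorem.
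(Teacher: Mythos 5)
The paper does not prove this theorem; it only cites it from Aho and from Duske--Parchmann, and your pushdown-automaton simulation is exactly the standard argument behind the cited result, so your proposal is correct and takes essentially the intended route. The one bookkeeping point you gloss over slightly: a form-(2) production $A \rightarrow Bf$ cannot carry a terminal prefix, so in the $\NPDA$-to-grammar direction a push move that also reads an input letter must be split into a reading production of form (1) followed by a pushing production of form (2) through a fresh intermediate variable; this is covered in spirit by the normalisation you already invoke and does not affect correctness.
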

From this, the following is evident.
\begin{theorem}\label{First-hierarchy}
\label{initialhierarchy}
$\CFL\ \subset \  \ILIN \ \subset \  \UFIN \  \subseteq \ \FIN.$
 \end{theorem}
 Indeed Theorem \ref{thm-A} provides the inclusion $\CFL \subseteq  \ILIN$.  
 The inclusions 
$\ILIN\subseteq \UFIN \subseteq \FIN$ come immediately from the definitions of the  corresponding families.
In \cite{DP} (see Theorem 2.8), it is shown that for an alphabet $T$,
and a letter $\$\notin T$,  if $M_1\$M_2$, with $M_1, M_2 \subseteq T^*$, is a linear
indexed language, then $M_1$ or $M_2$ is a context-free language.
 Let $T$ be an alphabet with at least two letters.
 Let $L_1 =  \{u^2: u\in T^*\}$, and let $L_2 =  \{u^2\$v^2: u, v \in T^*\}$. One easily sees that $L_1 \in \ILIN \setminus \CFL$,
and, since $L_2 = L_1 \$ L_1$, by the previous remark, $L_2\notin \ILIN$.  
 On the other hand, it is easily shown that $L_2 \in \UFIN$. 
More generally, one can verify that, for every $k\geq 1$, $L_k =  \{u^k\$v^k: u, v \in T^*\}\in    \UFIN$.  
 
By applying the same argument, one has that, on the alphabet $T = \{a, b, c, \$\}$, the language  $L =  \{a^n b^n c^n \$ a^m b^m c^m : n,m \geq 0\}$
cannot be linear indexed. 

  Next, closure under union and product is addressed for the family $\FIN$.
 \begin{lemma}\label{lm-0} The family $\FIN$ is closed under union and concatenation. 
\end{lemma}
\begin{proof}
Let $L_1$ and $L_2$ be indexed languages of indices  $k_1$ and $k_2$ respectively, and let $G_1$ and $G_2$ be grammars
 $$G_1 = (V_1, T_1, I_1, P_1, S_1), \quad G_2 = (V_2, T_2, I_2, P_2, S_2),$$
 such that $L_1= L(G_1)$ and $L_2=L(G_2)$. 
 Since we may rename variables and indices without changing the language generated, we assume
 that $V_1\cap V_2 = I_1 \cap I_2= \emptyset$. Moreover let $S$ be a new variable not in $V_1\cup V_2$. 
 
 Construct a new grammar $G = (V, T, I, P, S)$, where
 $V = V_1 \cup V_2 \cup \{S\}$, $I = I_1 \cup I_2$, and $P$ is equal to
 $P_1 \cup P_2$, plus the two productions $S\rightarrow S_1$ and $S\rightarrow S_2$.
  It is easily checked that $L_1 \cup L_2 = L(G)$ and $G$ is of index $\max \{k_1, k_2\}$.

For concatenation, let $G' = (V, T, I, P', S)$ be the grammar obtained from $G$, by setting $P'$ equal to
 $P_1 \cup P_2$, plus the production $S\rightarrow S_1 S_2$. 
 It is easily checked that $L_1L_2 = L(G')$ and $G'$ is of index $1+ \max \{k_1, k_2\}$.
  \qed
\end{proof}

Next, we show that $\FIN$ is a full trio.
As a consequence, by using Nivat's theorem for the characterisation of rational relations of free monoids (see \cite{Be}, Ch. III, Thm 4.1), 
we will prove the fact that they are closed under rational transductions.
The proof is structured using a chain of  lemmas.
  \begin{lemma}\label{lm:1}  
$\FIN$ is closed under morphisms. 
\end{lemma}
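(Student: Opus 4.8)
The plan is to show that if $L = L(G)$ for a (uncontrolled) finite-index indexed grammar $G = (V,T,I,P,S)$ of index $k$, and $h : T^* \to \Sigma^*$ is a morphism, then $h(L)$ is generated by a finite-index indexed grammar of the same index. The key observation is that morphisms act only on the terminal symbols produced in the right-hand sides of productions, and never touch the variable/index machinery that the index count $|\nu_i|_V \le k$ depends on. So the natural approach is to build $G' = (V, \Sigma, I, P', S)$ by applying $h$ to the terminal portions of every production of $G$, leaving the variable and index structure untouched.

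Concretely, I would define $P'$ as follows. For a production of type (1) or (3), written in the abbreviated form $Af \to w_1 C_1 \cdots w_\ell C_\ell w_{\ell+1}$ with $f \in I \cup \{\lambda\}$, $C_j \in V$, $w_j \in T^*$, I replace it by $Af \to h(w_1) C_1 \cdots h(w_\ell) C_\ell h(w_{\ell+1})$, so the variables $C_j$ and their positions are preserved and only the interspersed terminal words are rewritten under $h$. Productions of type (2), namely $A \to Bf$, carry no terminals and are left unchanged. First I would verify that this is a syntactically valid indexed grammar: the forms (1), (2), (3) are preserved since $h(w_j) \in \Sigma^*$, and the distinction between form (1) and form (3) (i.e. whether an index is consumed) is untouched because $h$ does not alter the index symbol $f$.

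The core of the argument is then a derivation correspondence. I would argue that there is a step-for-step bijection between derivations in $G$ and derivations in $G'$: applying the original production $p$ to a sentential form $\nu$ in $G$ corresponds exactly to applying the image production $p'$ to the ``$h$-image'' of $\nu$ in $G'$, where the $h$-image of a sentential form $u_1 A_1 \alpha_1 \cdots u_k A_k \alpha_k u_{k+1}$ is $h(u_1) A_1 \alpha_1 \cdots h(u_k) A_k \alpha_k h(u_{k+1})$. This correspondence is exact because each derivation step inserts the right-hand side with the current index string $\alpha_i$ appended to every variable, and $h$ commutes with this insertion since it leaves variables and indices fixed. Consequently $S \Rightarrow_G^* u$ if and only if $S \Rightarrow_{G'}^* $ the $h$-image of the corresponding sentential form, and at a terminal word this gives $S \Rightarrow_{G'}^* h(u)$; hence $L(G') = h(L(G))$.

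The crucial point to nail down — and the only genuine obstacle — is that the index bound is \emph{preserved}, i.e. that $G'$ is still of index $k$ (or uncontrolled index $k$). This follows from the derivation correspondence because the variable count of a sentential form is unchanged by $h$: for every step, the $h$-image sentential form has exactly the same variables in the same positions, so $|\nu_i|_V = |h\text{-image of }\nu_i|_V$, and thus every derivation witnessing index-$k$ in $G$ maps to one witnessing index-$k$ in $G'$. For the controlled case ($\FIN$) it suffices that each $u \in L(G)$ has \emph{some} index-$k$ derivation, which maps to an index-$k$ derivation of $h(u)$; for the uncontrolled case ($\UFIN$) the bijection guarantees that \emph{every} derivation in $G'$ arises from one in $G$ and hence also respects the bound. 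I would therefore present the two cases together, emphasizing that the bijection is what simultaneously establishes both $L(G') = h(L(G))$ and the inheritance of the (uncontrolled) index-$k$ property.
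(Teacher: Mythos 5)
Your proposal is correct and follows essentially the same route as the paper: both construct $G'$ by applying the morphism to the terminal segments of each production while leaving variables and indices untouched, and then observe that the derivation structure (and hence the index bound, in both the controlled and uncontrolled sense) is preserved. The paper states the verification as immediate, whereas you spell out the sentential-form correspondence, but the underlying argument is identical.
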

\begin{proof}
 
Let $L \in  \FIN$ and let $G = (V, T, I, P, S)$ be a $k$-index indexed grammar such that $L=L(G)$.
Let $\varphi : T^* \rightarrow (T')^*$ be a morphism where $T$ and $T'$ are two alphabets. Construct a new grammar $G'$
by replacing each production of $G$ of the form $$Xf \rightarrow u_1 X_1  \cdots u_\ell X_\ell u_{\ell +1},$$ where
$f\in I \cup \{\lambda\}, \ u_i \in T^*, \ X, X_i \in V,$ by the production
$$Xf \rightarrow \varphi(u_1) X_1  \cdots \varphi (u_\ell) X_\ell \varphi (u_{\ell +1}).$$
It is easily verified that the resulting grammar $G'$ satisfies $\varphi (L) = L(G')$ and $G'$ is a $k$-index grammar. 
\qed\end{proof}

%

  \begin{lemma}\label{lm:3}  
$\FIN$ is closed under intersection with regular languages. 
\end{lemma}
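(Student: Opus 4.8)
The plan is to adapt the classical triple (state-annotation) construction that intersects a context-free grammar with a finite automaton, lifting it to indexed grammars by observing that the index-stack operations are orthogonal to the bookkeeping of automaton states. Let $L = L(G)$ with $G = (V,T,I,P,S)$ an (uncontrolled) $k$-index grammar, and let $R$ be recognized by a DFA $M = (Q, T, \delta, q_0, F)$. I would build $G' = (V', T, I, P', S')$ whose variable set is $V' = (Q \times V \times Q) \cup \{S'\}$, writing a variable as a triple $[q,A,q']$ whose intended meaning is: $[q,A,q']\alpha \Rightarrow_{G'}^* w$ exactly when $A\alpha \Rightarrow_{G}^* w$ and $\delta^*(q,w) = q'$, i.e.\ reading $w$ drives $M$ from $q$ to $q'$. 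The decisive feature is that this annotation attaches two states to each \emph{occurrence} of a variable but neither creates nor deletes occurrences, so the number of variables in any sentential form is left unchanged by the construction.

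Next I would transform the productions. For a combined form (1)/(3) production $Af \rightarrow u_1 X_1 \cdots u_\ell X_\ell u_{\ell+1}$ (with $f \in I \cup \{\lambda\}$), I would add, for every choice of states $p_0, p_1, \ldots, p_{2\ell+1} \in Q$ satisfying $\delta^*(p_{2j-2}, u_j) = p_{2j-1}$ for $1 \le j \le \ell+1$, the production
$$[p_0, A, p_{2\ell+1}]f \rightarrow u_1 [p_1, X_1, p_2]\, u_2 [p_3, X_2, p_4] \cdots u_\ell [p_{2\ell-1}, X_\ell, p_{2\ell}]\, u_{\ell+1},$$
so the guessed states thread left-to-right through the terminal blocks and the variable occurrences exactly as $M$ reads the eventual yield, with $[p_{2j-1}, X_j, p_{2j}]$ responsible for the stretch produced by $X_j$. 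A form (2) production $A \rightarrow Bf$ emits no terminals, so I would add $[q,A,q'] \rightarrow [q,B,q']f$ for all $q,q' \in Q$, passing the state pair through unchanged while pushing $f$ as before. Finally, to launch derivations I would add $S' \rightarrow [q_0, S, q_f]$ for every $q_f \in F$.

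Correctness I would establish by proving the stated equivalence $[q,A,q']\alpha \Rightarrow_{G'}^* w \iff (A\alpha \Rightarrow_{G}^* w \text{ and } \delta^*(q,w)=q')$ by induction on derivation length in each direction: the forward direction erases the state annotations to recover a $G$-derivation and reads off the automaton run, while the backward direction threads the unique run of $M$ on the yield through the triples; with the start productions this yields $L(G') = L(G) \cap R$. For the index bound, I would note that each step of a $G'$-derivation projects (delete the state annotations, replace $S'$ by $S$) to a step of a $G$-derivation carrying the same multiset of variable occurrences, so corresponding sentential forms have equal variable counts. Hence if $G$ is index-$k$ then $G'$ is too, as a witnessing index-$k$ derivation in $G$ lifts to $G'$; and if $G$ is uncontrolled index-$k$ then so is $G'$, since \emph{every} $G'$-derivation projects to a $G$-derivation all of whose forms have at most $k$ variables. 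This covers both $\FIN$ and $\UFIN$ at once.

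The point requiring the most care is confirming that the state-threading stays sound once variables carry index stacks and a single form-(1) production copies the same stack onto several children. The resolution is that the yield of any sentential form is read by $M$ strictly left to right in the order the variable occurrences appear, regardless of which indices sit on their stacks; the triple only records the boundary states of each occurrence's eventual contribution, so the stack pushes and pops (forms (2) and (3)) never disturb the adjacency of the guessed states. Thus the induction proceeds uniformly, and the only genuinely new content beyond the context-free case is the step-for-step preservation of the variable count, which is precisely what keeps the finite-index property intact.
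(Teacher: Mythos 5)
Your construction is correct and is essentially the same as the paper's: the classical state-triple annotation of variables, the same treatment of push productions $A\rightarrow Bf$, the same inductive equivalence $[q,A,q']\alpha \Rightarrow_{G'}^{*} w$ iff $A\alpha\Rightarrow_G^{*}w$ and $M$ runs from $q$ to $q'$ on $w$, and the same observation that the construction preserves variable counts form-by-form, which handles $\FIN$ and $\UFIN$ simultaneously. The only difference is that the paper first normalizes the grammar so that right-hand sides contain at most two variables (via an auxiliary claim) before annotating, whereas you annotate arbitrary productions directly by threading $2\ell+2$ guessed states through the terminal blocks and variable occurrences; this is a presentational choice and, if anything, makes the index-preservation argument slightly more immediate.
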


\begin{proof} 
 Let  $G = (V, T, I, P, S)$ be a finite-index  indexed grammar and let  $L=L(G)$.
Let $R$ be a regular language and let $\cal A$ be the finite automaton accepting $R$. 
Our main goal is to construct a finite index grammar $G'$ such that, given an arbitrary word $w\in L \cap R$, on one hand, generates $w$, and,
on the other hand, simulates in the automaton  $\cal A$ the recognition process of $w$. It is thus convenient to
rewrite the productions of the grammar $G$ in a suitable canonical form. 
Indeed, in the construction of the grammar $G'$, such form allows then to match the $1$-step derivations of $G$
with the atomic transitions of $\cal A$. For this reason, the 
following claim is needed. 
\medskip

%
%

\noindent
{\bf Claim 1.} {\em
There exists a finite-index indexed grammar 
$G' = (V', T, I', P', S')$ generating $L$ such that $I'=I$ and the productions of $P'$ are of the form: 
 $$
\mbox{1)\ } A \rightarrow \nu, \quad \mbox{2)\ } A \rightarrow Bf, \quad \mbox{or}\quad 
\mbox{3)\ } Af \rightarrow \nu,$$
where $A, B\in V',$ $f\in I'$ and $\nu  \in (V'\cup T)^*$ is a word of the form
$$\nu=u, \quad \mbox{or} \quad \nu =  uXZ, \quad \mbox{or} \quad \nu = uXv, \quad \quad X, Z \in V', \quad u, v\in T^*.$$}
 \medskip

 \noindent
{\bf Proof of the Claim.}
 Let us  first assume that $G$ has a sole production $p$  of the form 
\begin{equation}\label{eq:reducedgramm} 
A \rightarrow \nu= u_1 X_1 u_2 X_2 \cdots u_k X_k u_{k+1},  \quad k\geq 2,\  A, X_i\in V, u_i\in T^*.
\end{equation}
Define the following list of productions:
  \begin{enumerate}
\item[i.] $A \rightarrow u_1 X_1 Z_1$

\item[ii.] For every $j=1,\ldots, k-2, \quad Z_j \rightarrow u_{j+1} X_{j+1} Z_{j+1}$

\item[iii.] $Z_{k-1}  \rightarrow u_k X_k u_{k+1},$
\end{enumerate}
where $Z_j, (j =1, \ldots, k-1)$, are new variables not in $V.$ 
 
Remove the production (\ref{eq:reducedgramm}) from $P$, add to $P$ the list of productions defined at (i)-(ii)-(iii) above, and add to $V$ the 
corresponding list of new variables $Z_j$'s. Let $G'$ be the grammar obtained from $G$ by using the previous transformation.
 We now observe that $G'$ satisfies the claim and that the derivation of $G$ defined by (\ref{eq:reducedgramm}) is simulated by the 
 derivation of $G'$:
$$A \Rightarrow _{G'} u_1 X_1 Z_1 \Rightarrow _{G'} u_1 X_1 u_2 X_2 Z_2  \Rightarrow _{G'} \cdots  
\Rightarrow _{G'} u_1 X_1 u_2 \cdots u_{k-1} X_{k-1} Z_{k} 
\Rightarrow _{G'} \nu.$$
Moreover  such derivation has index not larger than that of $G$.  
From the latter remark, it is easily checked, by induction on the length of the derivations of $G'$, that $G'$ has the same index of $G$ and that $L = L(G')$.

The case of productions $Af \rightarrow \nu, f\in I$, is 
similarly treated. If $G$ has two or more productions of the forms previously considered, the claim is obtained by
iterating the previous argument.

\bigskip
Let $G = (V, I, T, P, S)$ be a finite-index indexed grammar in the form given by the previous Claim.
Let $R$ be a regular language over $T$ and let  ${\cal A} = (Q, T, \tau, q_0, K)$  be a finite deterministic and complete automaton accepting $R$, where 
$Q$ is the set of states of $\cal A$, $\tau: Q\times T\rightarrow Q$ is its transition function, 
$q_0 \in Q$ is its unique initial state while $K$ is the set of final states of $\cal A$. In the sequel, for the sake of
simplicity,  the extension of the function $\tau$ to the set $Q\times T^*$ will be still denoted  by $\tau$.

\noindent
We proceed to  construct a new finite-index 
grammar $G'$ such that $G'= (V', I', T, P', S')$ and $L(G') = L \cap R$. 

The set $V'$ of variables of $G'$ will be of the form $\langle p, X, q \rangle$, where $p$ and $q$ are in $Q$ and $X$ is in $V$,
together with a new symbol $S'$, denoting the start variable of $G'$. 

The set $I'$ of indices of $G'$ is a copy of $I$ disjoint with it. For every index $f$ of $I$, we will denote by $f'$ the  corresponding copy of $f$ in $I'$
 (it is understood that if $f=\lambda$ then $f'=\lambda$). 

The set $P'$ of productions of $G'$ is defined as follows. 
\begin{enumerate}
\item If $Af\rightarrow u$ is in $P$, where $f\in I \cup \{\lambda\}, u \in T^*$, and $\tau (p, u)=q$, then $P'$ contains the set of productions
$\langle p, A, q \rangle f' \rightarrow u$, for all $p, q\in Q$ such that $p$ is transformed to $q$ by reading $u$.

\item If $A\rightarrow Bf$ is in $P$, where $f\in I$, then $P'$ contains the set of productions
$$\langle p, A, q \rangle \rightarrow \langle p, B, q \rangle f',$$ where    $p, q$ are two
arbitrary states of $Q$.

\item If $Af\rightarrow vDw$ is in $P$, where $f\in I \cup \{\lambda\}, A,D \in V, v,w \in T^*$, then $P'$ contains, for all $p, q, r, s\in Q$, the set of productions
$$\langle p, A, q \rangle f' \rightarrow v\langle r, D, s \rangle w,$$
provided that $\tau (p, v)=r$, and $\tau (s, w)=q$.

\item If $Af\rightarrow uBC$ is in $P$, where $f\in I \cup \{\lambda\}, A,B,C \in V, u \in T^*$, then $P'$ contains, for all $p, q, r', r''\in Q$, the set of productions
$$\langle p, A, q \rangle f' \rightarrow     u \langle r', B, r'' \rangle \langle r'', C, q \rangle, $$
provided that $\tau (p, u)=r'$.

\item Finally $P'$ contains the production $S' \rightarrow \langle s_0, S, p \rangle$, for all $p\in K$.
\end{enumerate}
No other production different from the form specified in the list above is in $P'$.

The first task is to show that $L\cap R = L(G')$. For this purpose, we first show that:
  $\langle p, A, q \rangle f'_1 \cdots f'_i \Rightarrow _{G'} ^* u,$ with $ i\geq 0$, $u\in T^*$, if and only if 
$ A  f_1 \cdots f_i \Rightarrow _{G} ^* u$ and $\tau (p, u)=q$. 
Indeed, from this statement, we get $S' \Rightarrow _{G'}  \langle s_0, S, q  \rangle   \Rightarrow _{G'} ^* u$, for some $q\in K$, if and only if
 $S \Rightarrow _{G} ^* u,$ and $\tau (s_0, u)=q$, which is sufficient to complete the proof.
  
Let us first prove that:
\[
\begin{array}{rl}
\mbox{(*)}  &  \mbox{If  $\langle p, A, q \rangle f'_1 \cdots f'_i \Rightarrow _{G'} ^\ell u$ is a derivation of $G'$ of length $\ell\geq 0$ then}    \\
  &       \mbox{$ A  f_1 \cdots f_i \Rightarrow _{G} ^* u$ and $\tau (p, u)=q$.}
\end{array}
\]
(*) is easily checked to be true for derivations of length $1$. Now suppose that (*) is true for all $m < \ell$ with $m\geq 1$ and let 
$\langle p, A, q \rangle f'_1 \cdots f'_i \Rightarrow _{G'} ^\ell u$ be a derivation of $G'$ of length $\ell$. Such a derivation can be of one of the following forms. 
\begin{description}
 \item[ (i)] 
$\langle p, A, q \rangle f'_1 \cdots f'_i \Rightarrow _{G'} \langle p, B, q \rangle f' f'_1 \cdots f'_i \Rightarrow^{\ell -1} _{G'}  u, \quad f'\in I',$ 
\medskip

\noindent
that is, the first production of the derivation has the form $(2)$. By the inductive hypothesis, we then have
$Bf f_1 \cdots f_i \Rightarrow ^*  _{G} u$ and $\tau (p, u) = q$, which yields 
$A f_1 \cdots f_i \Rightarrow_G Bf f_1 \cdots f_i \Rightarrow ^*  _{G} u$ and $\tau (p, u) = q$.
\medskip

\noindent
\item[ (ii)] 
$\langle p, A, q \rangle f'  f'_1 \cdots f'_i  \Rightarrow _{G'} v\langle r, D, s  \rangle   f'_1 \cdots f'_i  w\Rightarrow^{\ell -1} _{G'}  u, \, f'\in I' \cup \{\lambda\},$ 
\medskip

\noindent
that is, the first production of the derivation has the form $(3)$. Set $u = vu'w$. From the latter, we get
$\langle r, D, s  \rangle   f'_1 \cdots f'_i  \Rightarrow^{\ell -1} _{G'}  u'$ so that, 
by the inductive hypothesis,  
$D f_1 \cdots f_i \Rightarrow ^*  _{G} u'$ and $\tau (r, u') = s$. 
On the other hand, we know that 
$$Af \Rightarrow _{G} vDw, \quad \quad \tau (p, v)=r, \quad \tau (s,w)=q,$$ thus yielding
 $Af f_1 \cdots f_i \Rightarrow   _{G} vDf_1 \cdots f_iw \Rightarrow ^*  _{G} vu'w=u$. Furthermore, $\tau (p, v) = r$, $\tau (s, w) = q$
which gives  $\tau (p, u) = q$.
\medskip

\noindent

\item[ (iii)] 
$\langle p, A, q \rangle f'  f'_1 \cdots f'_i  \Rightarrow _{G'} v    \langle r', B, r''  \rangle f'_1 \cdots f'_i \langle r'', C, q  \rangle    
f'_1 \cdots f'_i  \Rightarrow^{\ell -1} _{G'}  u,$ \\ $f'\in I' \cup \{\lambda\}, \quad  r'=\tau (p, v),$ 
\medskip

\noindent
that is, the first production of the derivation has the form $(4)$. Set $u = vu'$, with $u'\in A^*$. From the second sentential form, we get
$$ \langle r', B, r''  \rangle f'_1 \cdots f'_i  \Rightarrow^{\ell_1} _{G'}  u'_1, \quad
 \langle r'', C, q  \rangle f'_1 \cdots f'_i  \Rightarrow^{\ell_2} _{G'}  u'_2,$$ where $u'=u'_1 u'_2,$ with $u' _1, u' _2\in A^*, l_1<l, l_2<l$.
By the inductive hypothesis,  we have 
$$
B f_1 \cdots f_i \Rightarrow ^*  _{G} u'_1, \quad C f_1 \cdots f_i \Rightarrow ^*  _{G} u'_2,
$$
together with \begin{equation}\label{eq:lm:3}\tau (r', u'_1) = r'', \quad \tau (r'', u'_2) = q,\end{equation}  thus yielding
\medskip

\noindent
$Af f_1 \cdots f_i \,\Rightarrow  _{G} \, vBf_1 \cdots f_i Cf_1 \cdots f_i \, \Rightarrow ^*  _{G}\, $
%
$ vu'_1 Cf_1 \cdots f_i  \,  \Rightarrow ^*  _{G} \, vu'_1u'_2=vu'=u.$ 
 \medskip

\noindent
Finally, from (\ref{eq:lm:3}) and $\tau (p, v) = r'$,  
we get  $\tau (p, u) = q$.
\medskip

\noindent
\item[ (iv)] 
$\langle p, A, q \rangle f' f'_1 \cdots f'_i \Rightarrow _{G'} w f'_1 \cdots f'_i \Rightarrow^{\ell -1} _{G'}  u,$ 
\medskip

\noindent
that is, the first production of the derivation has the form $(1)$. 
In this case, $f'_1 = \cdots = f'_i=\lambda,$ and $\ell=1$ so that the claim is trivially proved.\end{description}

Since the latter  cases represent all the possible ways an arbitrary derivation can start,  (*) is proved. Similarly, 
taking into account the fact that the productions of $G$ are in the form given in Claim 1, 
one proves by induction on the length of a derivation in $G$ that if 
 $A f_1 \cdots f_i \Rightarrow _{G} ^\ell u$ is a derivation of $G$ of length $\ell\geq 0$ and $\tau (p, u)=q$ then
$\langle p, A, q \rangle f'_1 \cdots f'_i \Rightarrow _{G'} ^* u$. By the previous remark, this implies that $L(G') = L(G) \cap R$. 
 
 Let $\delta '$ be a derivation of $G'$. By induction on the length of $\delta'$, one can prove the existence of a derivation
 $\delta$ of $G$ that simulates (step by step) $\delta'$. This implies that if $G$ is a grammar of finite index,
 then $G'$ is of the same type as well. 
  This concludes the proof.
     \qed\end{proof}

Next, we show closure under an inverse morphism. 

Let $T$ and $T'$ be two alphabets with $T\subseteq T'$ and let $\widehat{\pi_{T}} : (T')^* \rightarrow T^*$ be the 
projection of $(T')^*$ onto $T^*$, that is the epi-morphism from $(T')^*$ onto $T^*$ generated by the mapping
${\pi_{T}} : T' \rightarrow T \cup \{\lambda\}$ 
$$\forall \ \sigma \in T',  \pi_{T} (\sigma) = 
\left \{
\begin{array}{ccc}
 \lambda  &  \ \mbox{if}\ \sigma \notin T,  \\
  \sigma  &  \ \mbox{if}\ \sigma \in T
\end{array} .
\right.
$$
In the sequel, for the sake of simplicity, we denote the projection $\widehat{\pi_{T}}$ by ${\pi_{T}}$. 
It is useful to remark that, for every $w\in T^*$ and $w'\in (T')^*$, with $w=a_1\cdots a_n, \, n\geq 0, \, a_i \in T$, 
\begin{equation}\label{eqproj}
w'\in \pi_T ^{-1} (w) \ \Leftrightarrow\ w' = w_1 a_1 \cdots w_{n}a_n w_{n+1}, \ w_i \in (T'\setminus T)^*.
\end{equation}

 \begin{lemma}\label{lm:2}  
   If $L\in \FIN$ with $L \subseteq T^*$, then $\pi_T ^{-1} (L) \in \FIN$. 
\end{lemma}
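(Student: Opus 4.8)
The plan is to realize $\pi_T^{-1}(L)$ by a grammar that derives exactly as $G$ does but is additionally permitted, at the frontier of every variable, to emit arbitrarily many letters of $Y := T'\setminus T$. By (\ref{eqproj}), a word $w'$ lies in $\pi_T^{-1}(L)$ if and only if $\pi_T(w')\in L$ and $w'$ is obtained from $w=\pi_T(w')$ by inserting (possibly empty) blocks of $Y$ between consecutive letters of $w$ and at its two ends. So it suffices to make every such ``gap'' fillable by a block of $Y$ while preserving the index, which is the whole point.

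First I would put $G$ in the normal form of the Claim in the proof of Lemma \ref{lm:3}, so that every production has right-hand side of shape $u$, $uXZ$, or $uXv$ with $u,v\in T^*$. The key preparatory step is to reduce every terminal factor to single letters, each sitting next to a variable, without raising the index. A terminal prefix $a_1\cdots a_m$ is replaced by a right-linear chain $A\to a_1A_1$, $A_1\to a_2A_2,\ldots$, and a terminal suffix $v=b_1\cdots b_p$ of a production $Af\to uXv$ is replaced by a pass-through chain $Af\to uY_1b_p$, $Y_1\to Y_2b_{p-1},\ldots, Y_{p-1}\to Xb_1$. Each new production rewrites one variable into exactly one variable (the split into two variables happens only at the original $XZ$ step), so the number of variables in every sentential form is unchanged and the index is preserved; crucially this also works in the uncontrolled case, since no derivation can defer these single terminals and thereby accumulate variables.

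Second, to every variable $A$ (including the new chain variables) I would add the index-preserving emission productions $A\to\sigma A$ and $A\to A\sigma$ for each $\sigma\in Y$, and introduce a fresh start symbol $S'$ with $S'\to SH$ together with $H\to\sigma H \mid \lambda$, so that $H$ stays rightmost and supplies the trailing block of $Y$ while the leftmost variable supplies the leading block. Call the result $G'$. Since emissions rewrite one variable into one variable and $H$ is a single variable present throughout, every sentential form of $G'$ has at most one more variable than the corresponding form of $G$; hence $G'$ is (uncontrolled) $(k+1)$-index whenever $G$ is (uncontrolled) $k$-index.

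Finally I would prove $L(G')=\pi_T^{-1}(L)$ by the same double induction on derivation length as in Lemma \ref{lm:3}. For $L(G')\subseteq\pi_T^{-1}(L)$, deleting every $\sigma\in Y$ and contracting the chains in a $G'$-derivation yields a $G$-derivation of $\pi_T(w')$, so $\pi_T(w')\in L$. For the reverse inclusion, given $w=a_1\cdots a_n\in L$ and any factorization $w'=z_0a_1z_1\cdots a_nz_n$ with $z_i\in Y^*$, I would lift a derivation of $w$ to $G'$ and insert each $z_i$ at its gap using the emission of the variable adjacent to that gap at the right moment: left emission before a prefix letter, right emission of the preceding variable before a suffix singleton, and $H$ for the final gap. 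The main obstacle is exactly this last point, namely guaranteeing that every gap of $w'$ is met by some variable's emission at some step; this is precisely why the normal-form reduction to single terminals adjacent to variables is needed, and in particular why the pass-through chain for terminal suffixes is essential, since it keeps the insertion points next to a variable without those extra variables accumulating in uncontrolled derivations.
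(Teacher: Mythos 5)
Your proposal is correct and follows essentially the same route as the paper's proof: both realize $\pi_T^{-1}$ by expanding each production into non-branching chains that spell out its terminal factors one letter at a time, interleaved with unary productions emitting arbitrary letters of $T'\setminus T$ adjacent to a variable, so the simulation is one-variable-to-one-variable, the (uncontrolled) finite index is preserved up to an additive constant, and correctness is a double induction on derivation length. The only differences are organizational --- the paper attaches the padding rules to fresh per-production variables $Y_{i,j}$ (with the last one playing the role of your $H$ for the trailing block) rather than first normalizing via the Claim of Lemma~\ref{lm:3} and then adding global left/right emission rules --- and these do not affect the substance of the argument.
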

\begin{proof} 
  Let $G = (V, T, I, P, S)$ be a  finite-index indexed grammar generating $L$. We construct a   finite-index grammar $G'$
 generating $\pi_T ^{-1} (L)$ with the same index.

 For this purpose, let $p=Xf \rightarrow \nu$, with $X \in V, f\in I \cup \{\lambda\}$, and $\nu \in (VI^* \cup T^*)^*$,  be a production  of $G$ of the form (1) or (3)
(according to Definition \ref{def0}). Then  $p$ has  the form
 $$Xf \rightarrow \nu= u_1 X_1 \cdots u_k X_k u_{k+1}, \quad u_i \in T^*,$$
 where $ X, X_i\in V,$ with $i=1,\ldots, k,$ and, for every $i=1,\ldots, k+1$, 
$$u_i = a_{i,1} \cdots a_{i, n_i}, \quad   n_i \geq 0, \quad a_{i, j} \in T.$$
 Let us associate with $p$, the following set of productions:
\begin{itemize}
\item $X f \rightarrow Y_{1,0}\cdots Y_{k,0},$
\item $\forall \ i=1,\ldots, k,\,  \forall \ j=0,\ldots, n_i , \quad Y_{i, j} \rightarrow cY_{i, j}, \quad c\in T' \setminus T,$
\item $\forall \ i=1,\ldots, k,\,  \forall \ j=0,\ldots, n_i -1, \quad  Y_{i,j} \rightarrow a_{i, j+1}Y_{i,j+1},$ 
  \item $\forall \ i=1,\ldots, k-1, \quad Y_{i,n_i} \rightarrow X_i, \quad\quad Y_{k,n_k} \rightarrow Y' _{k,0},$
 \item $\forall \ j=0,\ldots, n_{k+1} , \quad Y' _{k, j} \rightarrow Y'_{k, j}c, \quad c\in T' \setminus T,$
\item $ \forall \ j=0,\ldots, n_{k+1} -1, \quad  Y'_{k,j} \rightarrow Y'_{k,j+1}a_{k+1, n_{k+1} - j},$
 \item $ Y'_{k,n_k} \rightarrow X_k,$

\end{itemize}
where   $Y_{i,j}$ and $Y'_{k,j}$  are new variables not in $V$.

 Now remove the production $p$ from $P$ and add respectively to $P$ and $V$ the productions defined above and
the corresponding set of new variables $Y_{i,j}$'s and $Y'_{k,j}$'s.

By  applying the previous argument to every production $p$ of the latter form, we will get a new grammar $G' = (V', T', I', P', S'),$
where $I'=I$, $S'=S$ and the sets $V'$ and $P'$ are obtained from $V$ and $P$ respectively by iterating the latter combinatorial transformation.

It is useful now to remark that, in correspondence of every production 
 $Xf \rightarrow u_1 X_1 \cdots u_kX_k u_{k+1}$,    of $G$ of the form (1) or (3), there exists
 a derivation of $G'$ such that
 $$Xf \Rightarrow_ {G'}^*  w_1 X_1 w_2  X_2\cdots w_k X_k w_{k+1},$$
 where, for all $i=1,\dots, k+1$, $w_i \in (T')^*$ and  $w_i \in \pi_{T} ^{-1}(u_i)$.   
  
 Taking into account the latter argument, the form of the new productions added to $G'$,
  and Eq. (\ref{eqproj}), by induction on the length of the derivations of $G$ and $G'$ respectively,  one proves 
 the following two claims:
  \begin{itemize}
  \item for every $w'\in T'^*$, $S' \Rightarrow_{G'} ^* w'$ if and only if there exists a derivation of $G$ 
  such that $S \Rightarrow_G ^* w$, with $w\in T^*$, and  $w'\in \pi_{T} ^{-1}(w)$.
\item if a non negative integer  bounds
the index of an arbitrary  derivation of $G$  the same does for $G'$. This implies that $G'$ is a finite-index grammar.
\end{itemize}
This concludes the proof. 
     \qed\end{proof}

Next, it is possible to show closure under rational transductions.
  \begin{lemma}\label{lm:2-trans}  
  Let $T$ and $T'$ be two alphabets. Let $\tau : T^* \rightarrow (T')^*$ be a rational transduction from $T^*$ into $(T')^*$. 
  If $L$ is a language of $T^*$ in the family $\FIN$, then $\tau (L) \in \FIN$.
  \end{lemma}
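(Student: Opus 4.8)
The plan is to reduce the statement to the three closure properties already established, using the classical Nivat-style characterization of rational transductions. Recall that every rational transduction $\tau : T^* \rightarrow (T')^*$ admits a decomposition: there exist an alphabet $\Sigma$ with $T, T' \subseteq \Sigma$ (taken disjoint, up to renaming), a regular language $R \subseteq \Sigma^*$, and two projection-type morphisms such that
$$\tau(L) = g\bigl(h^{-1}(L) \cap R\bigr),$$
where $h$ and $g$ erase the letters not belonging to $T$ and $T'$ respectively. More precisely, I would invoke the standard fact that a full trio is exactly a family closed under rational transductions, together with the theorem that the full trio generated by a family equals its closure under morphism, inverse morphism, and intersection with regular languages. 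Since Lemmas~\ref{lm:1}, \ref{lm:3}, and \ref{lm:2} give exactly those three closures (with the inverse-morphism closure stated in the projection form of Lemma~\ref{lm:2}), the result should follow by assembling them in the right order.

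The key steps, in order, are as follows. First I would fix the Nivat decomposition of $\tau$, choosing the intermediate alphabet so that the inverse map applied to $L$ is precisely a projection $\pi_T^{-1}$ of the kind treated in Lemma~\ref{lm:2}; this is why that lemma was deliberately phrased in terms of projections rather than arbitrary inverse morphisms. Starting from $L \in \UFIN$ with $L \subseteq T^*$, I would apply Lemma~\ref{lm:2} to obtain $\pi_T^{-1}(L) \in \UFIN$. Next I would intersect with the regular control language $R$ using Lemma~\ref{lm:3}, giving $\pi_T^{-1}(L) \cap R \in \UFIN$. Finally I would apply the output morphism (the projection onto $(T')^*$) using Lemma~\ref{lm:1}, concluding that $\tau(L) \in \UFIN$. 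The identical chain, reading $\FIN$ throughout, handles the finite-index case, since each of the three lemmas is stated for both families.

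The main obstacle is matching the intermediate data of the Nivat decomposition to the specific projection form required by Lemma~\ref{lm:2}, which only handles projections erasing letters outside a subalphabet, rather than general inverse morphisms. The cleanest way around this is to realize $\tau$ through a single finite transducer, encode each transition as a fresh symbol, let $R$ be the regular language of accepting transition-sequences over this enlarged alphabet, let the input projection recover the $T$-labels read, and let the output projection emit the $(T')^*$-labels written. One must verify that both erasing morphisms are genuine projections in the sense of the text (erasing exactly the letters of a complementary subalphabet), so that Lemma~\ref{lm:2} applies to the input side and Lemma~\ref{lm:1} to the output side. Once the alphabets are set up disjointly and the transition encoding is fixed, the verification that this composite equals $\tau(L)$ is routine and the semilinearity and full-trio consequences follow immediately.
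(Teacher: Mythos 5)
Your proposal is correct and follows essentially the same route as the paper: both invoke the Nivat/Berstel representation $\tau(L) = \pi_{T'}(\pi_T^{-1}(L) \cap R)$ and then chain Lemma~\ref{lm:2}, Lemma~\ref{lm:3}, and Lemma~\ref{lm:1} in that order, for both $\UFIN$ and $\FIN$. The only detail the paper spells out more explicitly is the reduction to the disjoint-alphabet case via a copying isomorphism $c_{T''}$ and its inverse, which you gloss as ``up to renaming''; this is harmless.
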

\begin{proof}
 Let us first assume that $T \cap T'=\emptyset$. By Nivat's theorem for the representation of rational 
transductions (see \cite{Be}, Ch. III, Thm 4.1), there exists a regular set $R$ over the
alphabet $(T \cup T')$ such that
$$
\tau = \{(\pi_T (u), \pi _ {T'}(u)): \ u \in R\},
$$
where $\pi_T$ and $\pi _ {T'}$ are the projections of $(T\cup T')^*$ onto $T^*$ and $T'^*$ respectively.
 
 From the latter, one has that, for every $u\in T^*$, $\tau (u) =  \pi _ {T'} (\pi _ T ^ {-1} (u) \cap R),$ so that
\begin{equation}\label{eq1:lm:2}  
\tau (L) = \bigcup _{u\in L} \tau (u) =   \pi _ {T'} (\pi _ T ^ {-1} (L) \cap R).
\end{equation}
Since, by hypothesis, $L \in \FIN$, the claim follows from (\ref{eq1:lm:2}), by applying Lemma \ref{lm:1},   
\ref{lm:3}, and \ref{lm:2}.
 
  Let us finally treat the case where $T$ and $T'$ are not disjoint. Let $T''$ be a copy of $T'$ with $T'' \cap T = \emptyset$
 and let $c_{T''}: (T')^* \rightarrow (T'')^*$ be the corresponding copying iso-morphism from $(T')^*$ onto $(T'')^*$. 
 Since $T'' \cap T = \emptyset$, by applying the latter argument to
 the rational transduction $ c_{T''}\tau : T^* \rightarrow (T'')^*$, one has $( c_{T''} \tau) (L) \in \FIN$.
 Since $c_{T''} ^{-1}(( c_{T''} \tau) (L) ) = \tau (L)$, then the claim follows from the latter
 by applying Lemma \ref{lm:1}.      \qed\end{proof}

Since  inverse morphisms are rational transductions,  the following is immediate:
 \begin{corollary}\label{lm-cor}  
$\FIN$ is closed under inverse morphisms. 
\end{corollary}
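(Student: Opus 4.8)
The plan is to recognize that inverse morphism is simply a special instance of the rational transduction operation already handled by Lemma \ref{lm:2-trans}, so the corollary follows with essentially no further work. I would let $h : A^* \to B^*$ be an arbitrary morphism and let $L \subseteq B^*$ be a language in $\UFIN$ (resp.\ $\FIN$); the goal is to show that $h^{-1}(L) = \{w \in A^* : h(w) \in L\}$ lies in the same family. The central observation is that the graph of $h$, namely $\{(w, h(w)) : w \in A^*\}$, is a rational transduction from $A^*$ into $B^*$, since it is realized by a one-state finite transducer that reads each letter $a \in A$ and emits $h(a)$. Because the class of rational transductions is closed under inversion of relations, the inverse relation $\tau = \{(h(w), w) : w \in A^*\}$ is itself a rational transduction, this time from $B^*$ into $A^*$.

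Next I would verify the key identity $\tau(L) = h^{-1}(L)$. By the definition of the image of a language under a transduction, $\tau(L) = \{w \in A^* : (v,w) \in \tau \text{ for some } v \in L\} = \{w \in A^* : h(w) \in L\} = h^{-1}(L)$. Since $L \in \UFIN$ (resp.\ $\FIN$), applying Lemma \ref{lm:2-trans} to the transduction $\tau : B^* \to A^*$ yields at once that $\tau(L) = h^{-1}(L) \in \UFIN$ (resp.\ $\FIN$), which is the desired conclusion.

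There is essentially no hard step in this argument; the only point requiring a moment of care is that the morphism $h$ may be erasing, mapping some letters of $A$ to $\lambda$, so that $h^{-1}(L)$ can freely interleave such erasable letters among the preimages of the symbols of $B$. This causes no difficulty, as the finite transducer realizing $\tau$ may emit output symbols while consuming no input, and rational transductions accommodate this by definition; hence the representation $\{(h(w), w) : w \in A^*\}$ remains a genuine rational transduction whether or not $h$ is erasing, and the corollary follows.
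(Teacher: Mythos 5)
Your argument is correct and follows essentially the same route as the paper, which derives this corollary directly from Lemma \ref{lm:2-trans} by observing that the relation $\{(h(w),w) : w \in A^*\}$ is a rational transduction whose image of $L$ is $h^{-1}(L)$. Your additional care about erasing morphisms is sound, since rational transductions permit output on $\lambda$-input transitions.
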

By Lemma \ref{lm:1}, Lemma \ref{lm:3}, and Corollary \ref{lm-cor},   we obtain:
\begin{theorem} The family $\FIN$ is a full trio.
\label{trios}
\end{theorem}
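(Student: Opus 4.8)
The plan is to observe that the statement is an immediate consequence of the three closure properties already established above, together with the definition of a full trio recalled in Section~\ref{sec:prelims}. Recall that a language family is a full trio precisely when it is closed under (arbitrary) morphism, inverse morphism, and intersection with regular languages. I would therefore simply check off each of these three required operations against the results proved in this section, and do so simultaneously for $\FIN$ and $\UFIN$, since every lemma cited has been stated for both families.

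Concretely, I would argue as follows. Closure under morphisms is exactly the content of Lemma~\ref{lm:1}. Closure under intersection with regular languages is Lemma~\ref{lm:3}. Closure under inverse morphisms is Corollary~\ref{lm-cor}, which itself is obtained from the closure under rational transductions of Lemma~\ref{lm:2-trans} by noting that an inverse morphism is a particular rational transduction. Assembling these three facts, each family satisfies all three defining conditions, and hence is a full trio.

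Since all three defining operations already have independent proofs above, there is genuinely no obstacle remaining at this final step: the argument is pure bookkeeping, a matter of matching each clause of the definition to the corresponding earlier result. The substantive difficulty lies upstream, in Lemma~\ref{lm:3}, where the normal-form Claim is needed so that the states of the automaton accepting $R$ can be threaded through the index-bearing productions via the triples $\langle p, X, q\rangle$ without ever increasing the index of the grammar; this is what guarantees that the constructed grammar $G'$ stays in the same family. Once that lemma together with Lemma~\ref{lm:1} and Corollary~\ref{lm-cor} are in place, Theorem~\ref{trios} follows directly.
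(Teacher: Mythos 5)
Your proposal matches the paper exactly: the theorem is stated as an immediate consequence of Lemma~\ref{lm:1} (morphisms), Lemma~\ref{lm:3} (intersection with regular languages), and Corollary~\ref{lm-cor} (inverse morphisms), which is precisely the bookkeeping you describe. Your additional remarks correctly locate the substantive work upstream in those lemmas, so nothing is missing.
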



We now prove a result which extends the semi-linearity
of a family of languages to a bigger family. 
If $\cal C$ is a full trio of semi-linear languages and
$\cal L$ is the family of languages $\LL(\NCM)$ accepted by $\NCM$s,
let ${\cal C}  \wedge {\cal L} = \{L_1 \cap L_2  : L_1 \in {\cal C},
 L_2 \in {\cal L}\}$.
%
%
\begin{theorem}\label{prop-3}
Let ${\cal C}$ be a full trio of semi-linear languages. 
Every language in ${\cal C}  \wedge {\cal L}(\NCM)$ has a semi-linear Parikh image.
\end{theorem}

\begin{proof} Let $A$ and $B$ be disjoint alphabets.  Consider the homomorphism $$\widehat{\pi_{A}}: (A \cup B)^*  \rightarrow A^*$$
defined before Lemma \ref{lm:2}.
If $L$ is a language over $A^*$, then $\widehat{\pi_{A}}^{-1} (L) = \{ x :  x \in (A \cup B)^*, h(x) \in L\}$.

Let $A = \{a_1, \ldots, a_n\}$ and $L_1 \subseteq A^*$ be in $\cal C$. 
Then $\widehat{\pi_{A}}^{-1}(L_1)$ is also in $\cal C$, since $\cal C$ is closed under inverse homomorphism.
Note that the Parikh image of $L_1$, $\psi(L_1)$, is semi-linear since $\cal C$ is a semi-linear family.

Now let $L_2 \subseteq A^*$ be a language accepted by an $\NCM$ $M_2$. Any $\NCM$ can be simulated by an $\NCM$ $M_2$ whose counters are $1$-reversal \cite{BB1974}. We may assume that a string is accepted by $M_2$ if and only if it enters a unique halting state $f$  with all counters zero.

Let $M_2$ have $k$ $1$-reversal counters.
Let $B = \{p_1, q_1, \ldots, p_k,q_k\}$ be new symbols disjoint from $A$. Construct an NFA $M_3$ which when given a string $w$ in $(A \cup B)^*$ simulates $M_2$, but whenever counter $c_i$ increments, $M_3$ reads the next input symbol and checks that it is $p_i$.   When $M_2$ decrements counter $c_i$,  $M_3$  reads $q_i$ from the input.
(Note that after the first $q_i$ is read,  no $p_i$ should appear on the remaining input symbols.)
$M_3$ guesses when each counter $c_i$ becomes zero (this may be different time for each $i$),
after which, $M_3$ should no longer read $q_i$. At some point, $M_3$ guesses that all counters are zero.
It continues the simulation and when $M_2$ accepts in state $f$, $M_3$ accepts.
Clearly, a string $x$ in $A^*$ is accepted by $M_2$ if and only if there is a string
$w$ in $(A \cup B)^*$ accepted by $M_3$ such that:
\medskip

\noindent
(1) $\widehat{\pi_{A}} (w) = x,$\\
(2) $|w|_{p_i} = |w|_{q_i}$ for each  $1 \leq i \leq k$.
\medskip

\noindent
Let $R_3$ be the regular set accepted by $M_3$.
Since $\cal C$ is a full trio:
$$\widehat{\pi_{A}}^{-1} (L_1) \in {\cal C}, \quad L_4 = ( \widehat{\pi_{A}}^{-1} (L_1) \cap R_3) \in {\cal C}.$$
Hence the Parikh image of $L_4, \psi(L_4),$ is a semi-linear set $Q_4.$

Now $A = \{a_1, \ldots, a_n\}$ and $B = \{p_1, q_1, \ldots, p_k, q_k\}$.
Define the semi-linear set $$Q_5 = \{ (s_1, \ldots,  s_n, t_1, t_1, \ldots,  t_k, t_k ) : s_i,  t_i \geq 0\}.$$
(Note that the first $n$ coordinates refer to the counts 
corresponding to symbols $a_1, \ldots,  a_n,$  and the last $2k$ coordinates
refer to the counts corresponding to symbols $(p_1,q_1, \ldots, p_k, q_k)$.)

Then $Q_6 = Q_4 \cap Q_5$ is semi-linear,  since semi-linear sets are closed under intersection.
Now $\psi(L_1 \cap  L_2)$ coincides with the projection of $Q_6$ on the first $n$ coordinates.
Hence $\psi(L_1 \cap L_2)$ is semi-linear, since semi-linear sets are closed
under projections.
\qed
\end{proof}

Note that the above proposition does not depend on how the languages in
$\cal C$ are specified.
It extends the semi-linearity of languages
in $\cal C$ to a bigger family that can do some  ``counting''.
The theorem applies to all well-known full trios
of semi-linear languages, in particular, to ${\cal C} = \UFIN$.

\begin{corollary}
 Let $\cal C$ be  a full trio whose closures under homomorphism, inverse homomorphism and intersection with regular sets are effective.
Moreover, assume  that for each $L$ in ${\cal C}$, $\psi(L)$ can effectively be constructed.
Then ${\cal C}  \wedge {\cal L}(\NCM)$ has a decidable emptiness problem.
 \end{corollary}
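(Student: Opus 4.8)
The plan is to combine Proposition~\ref{prop-3} with its effectivity hypotheses to reduce the emptiness problem for a language $L_1 \cap L_2 \in {\cal C}{\cal L}$ (with $L_1 \in {\cal C}$ and $L_2$ accepted by an $\NCM$) to a decidable question about semilinear sets. The key observation is that $L_1 \cap L_2$ is empty if and only if its Parikh image $\psi(L_1 \cap L_2)$ is empty, since $\psi$ maps a word to a tuple and the empty language is the only one with empty Parikh image. So it suffices to effectively construct $\psi(L_1 \cap L_2)$ as a semilinear set and then test that semilinear set for emptiness.

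First I would trace through the construction in the proof of Proposition~\ref{prop-3}, checking that each step is effective under the stated hypotheses. Given $L_1 \in {\cal C}$ over $A$ and an $\NCM$ $M_2$ for $L_2$, I would: effectively build the homomorphism $h$ and form $h^{-1}(L_1) \in {\cal C}$ (using effective closure under inverse homomorphism); effectively construct the NFA $M_3$ and hence its regular set $R_3$ from $M_2$; form $L_4 = h^{-1}(L_1) \cap R_3 \in {\cal C}$ (using effective closure under intersection with regular sets); and then effectively obtain the semilinear set $Q_4 = \psi(L_4)$ using the hypothesis that $\psi(L)$ is effectively constructible for each $L \in {\cal C}$. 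The auxiliary set $Q_5$ is given explicitly, so $Q_6 = Q_4 \cap Q_5$ is obtainable since semilinear sets are effectively closed under intersection, and the projection onto the first $n$ coordinates giving $\psi(L_1 \cap L_2)$ is likewise effective.

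The final step is to decide emptiness of the resulting semilinear set. Since a semilinear set is given as a finite union of linear sets $\mathbf{b}_0 + \{\mathbf{b}_1,\ldots,\mathbf{b}_n\}^\oplus$, it is empty exactly when this finite list of linear sets is empty, i.e.\ when there are no offset vectors at all; equivalently, emptiness of a semilinear set presented by generators is trivially decidable by inspecting whether any linear component is present. Chaining these effective steps together yields an algorithm that, given presentations of $L_1$ and $M_2$, decides whether $L_1 \cap L_2 = \emptyset$.

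I expect the only real content here to be bookkeeping: verifying that every transformation used in the proof of Proposition~\ref{prop-3} preserves effectivity. The potential obstacle is that Proposition~\ref{prop-3} as stated is a purely existential semilinearity result, so I would need the three effectivity hypotheses on ${\cal C}$ (effective closure under homomorphism, inverse homomorphism, and intersection with regular sets, plus effective construction of $\psi(L)$) precisely to upgrade each existential ``there is a semilinear set'' into ``one can compute a semilinear set.'' Once that is in place, no genuinely new argument is needed beyond the observation that emptiness of a language coincides with emptiness of its Parikh image and that the latter is decidable for semilinear sets.
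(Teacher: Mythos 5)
Your proposal is correct and matches the argument the paper intends (the corollary is stated without an explicit proof precisely because it is the effectivized reading of Proposition~\ref{prop-3}): trace the construction of $L_4$, $Q_4$, $Q_5$, $Q_6$ and the final projection, observe each step is effective under the stated hypotheses, and reduce emptiness of $L_1\cap L_2$ to emptiness of the computed semilinear presentation of $\psi(L_1\cap L_2)$. No gaps.
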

 Indeed, decidability of emptiness follows immediately from effective construction of the semilinear set \cite{CIAA2016} as
having any vector describe a linear set implies the language is non-empty, and no vector implies the language is empty.
Note that $\cal L$ is also a full trio of semi-linear languages.  It is easy to see that the
theorem is not true if  $\cal L$ is an arbitrary full trio of semi-linear languages.
For example suppose ${\cal C} = {\cal L}$ is the family of languages accepted by 1-reversal
NPDAs (= linear context-free languages).  Let
$$L _1 = \{a^{n_1} \sharp \cdots \sharp a^{n_k} \$ a^{n_k} \sharp  \cdots \sharp a^{n_1}  : k \geq 4 , n_i \geq 1\},$$
$$L_2  = \{a^{n_1} \sharp \cdots \sharp  a^{n_k} \$ a^{m_k} \sharp \cdots \sharp a^{m_1}  : k \geq 4, \ n_i, \ m_i \geq 1, \ m_j= n_{j+1}, \ 1 \leq j < k\}.$$

Clearly, $L_1$ and $L_2$ can be accepted by 1-reversal NPDAs.
But $L_ 1 \cap L_2$ is  
$\{ (a^n\sharp)^{k-1} a^n \$ (a^n\sharp)^{k-1} a^n   :  n \geq 1, \ k \geq 4\}$ and it is not semi-linear.

 Similarly, it is known that the theorem does not hold
 when ${\cal C} = {\cal L}$ is the
 family of languages accepted by NFAs with one unrestricted counter (i.e., NPDAs
  with a unary stack alphabet in addition to a distinct bottom of the stack symbol which
  is never altered), as similar languages $L_1, L_2$ in this family can be constructed such that their intersection
  is not semilinear (Proposition 31 of \cite{EIMShuffle}).
  
Finally, let ${\cal C}_1$ and ${\cal C}_2$ be any full trios of semi-linear 
languages.  It is
clear that ${\cal C}_1  \cup {\cal C}_2 = \{L_1 \cup L_2 ~:~ L_1 \in {\cal C}_1, L_2 \in {\cal C}_2 \}$
is a semi-linear family.  One can also show that
${\cal C}_1 \cdot {\cal C}_2 = \{L_1 L_2 ~:~ L_1 \in {{\cal C}_1}, L_2 \in 
{{\cal C}_2} \}$ is
a semi-linear family.
%
%
%

\section{Bounded Languages and Hierarchy Results}

The purpose of this section is to demonstrate that all bounded Ginsburg semi-linear languages are in $\UFIN$ (thus implying they are in $\FIN$ as well), but not in $\ILIN$.

Notice that the language $L$ from the remarks following Theorem \ref{initialhierarchy} is a bounded Ginsburg semi-linear
language.
Thus, the following is true:
\begin{theorem}
\label{linearbounded}
There are bounded Ginsburg semi-linear languages that are not
in $\ILIN$.
\end{theorem}
Furthermore, it has been shown that in every semi-linear full trio,
all bounded languages in the family are bounded Ginsburg semi-linear
\cite{DLT2016}. Further, $\ILIN$ is a semi-linear full trio
\cite{DP}. Therefore, the bounded languages in $\ILIN$ are strictly
contained in the bounded languages contained in any family containing
all bounded Ginsburg semi-linear languages. We only mention here
three of the many such families mentioned in \cite{DLT2016}.
\begin{corollary}
The bounded languages in $\ILIN$ are strictly contained in the bounded
languages from $\LL(\NCM), \LL(\DCM), \ETOLfin$.
\end{corollary}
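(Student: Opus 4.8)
The plan is to obtain this as a direct consequence of results already in place, combining the two directions of \cite{DLT2016} with Proposition~\ref{linearbounded}. For the inclusion, I would argue as follows. Since $\ILIN$ is a semi-linear full trio \cite{DP}, the theorem of \cite{DLT2016} applies and guarantees that \emph{every} bounded language in $\ILIN$ is bounded Ginsburg semi-linear. On the other side, each of $\LL(\NCM)$, $\LL(\DCM)$, and $\ETOLfin$ contains all bounded Ginsburg semi-linear languages, these being three of the families identified in \cite{DLT2016} with this absorbing property. Chaining the two statements, any bounded $L \in \ILIN$ is bounded Ginsburg semi-linear and hence is a bounded language of each of the three target families, which is exactly the claimed inclusion of the bounded fragments.

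For strictness, I would appeal to Proposition~\ref{linearbounded}, which supplies a bounded Ginsburg semi-linear language that is not in $\ILIN$ (concretely, the language $L$ from Proposition~\ref{initialhierarchy}). Being bounded Ginsburg semi-linear, this $L$ is a bounded language of $\LL(\NCM)$, $\LL(\DCM)$, and $\ETOLfin$ by the same absorbing property, whereas $L \notin \ILIN$ by construction. Hence the bounded fragment of $\ILIN$ sits \emph{properly} inside the bounded fragment of each of the three families.

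Since both directions are essentially bookkeeping over already-established facts, I do not anticipate a substantive obstacle. The one point that warrants care is $\LL(\DCM)$: it is a deterministic class and is not a full trio, so the ``bounded implies bounded Ginsburg semi-linear'' half of \cite{DLT2016} is not available for it. Fortunately the argument only needs the converse half for $\LL(\DCM)$, namely that it absorbs every bounded Ginsburg semi-linear language, and this is precisely what is cited for it. Provided the absorbing property is invoked for each of $\LL(\NCM)$, $\LL(\DCM)$, and $\ETOLfin$ individually, rather than through a generic full-trio statement, the uniform argument settles all three inclusions simultaneously.
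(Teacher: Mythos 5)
Your argument is correct and matches the paper's own reasoning: the inclusion follows from the fact that $\ILIN$ is a semi-linear full trio (so its bounded languages are bounded Ginsburg semi-linear by \cite{DLT2016}) together with the absorbing property of the three target families, and strictness follows from Proposition~\ref{linearbounded}. Your extra remark about $\LL(\DCM)$ only needing the absorbing half of \cite{DLT2016} is a sensible precaution but does not change the argument.
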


\begin{theorem}
\label{boundedGinsburg}
$\UFIN$ contains all bounded Ginsburg semi-linear languages.
\end{theorem}
\begin{proof}
  We now prove that if $L$ is a bounded Ginsburg semi-linear language,
  with $L \subseteq u_1^* \cdots u_k^*$, then $L\in \UFIN$. 
    By Definition \ref{+}, $L= \varphi (B)$, where $B$ is a semi-linear subset of $\N^k$. 
  Since $\UFIN$ is closed under union by Lemma \ref{lm-0}, it is enough
to show it for a linear set $B$.
 Let    $B$ be  a set of the form
$B= \{{\bf b}_0 + x_1{\bf b}_1 + \cdots + x_\ell{\bf b}_{\ell} \ : x_1, \ldots, x_\ell \in \N\}, $
  where ${\bf b}_0, {\bf b}_1, \ldots, {\bf b}_{\ell},$ are  vectors of $\N^k$. By denoting the arbitrary vector 
  ${\bf b}_i$ as  $(b_{i1}, \ldots, b_{ik})$, we write $B$ as
  $$\{({ b}_{01} + x_1{ b}_{11} + \cdots + x_\ell{ b}_{\ell1}, \ \ldots, \ { b}_{0k} + x_1{ b}_{1k} + \cdots + x_\ell{ b}_{\ell k}), \  : x_1, \ldots, x_\ell \in \N\}, $$
  so that    the language $L = \varphi (B)$ becomes
  
  \begin{equation}\label{eq-L-1}
u_1 ^{{ b}_{01} + x_1{ b}_{11} + \cdots + x_\ell{ b}_{\ell1}}u_2 ^{{ b}_{02} + x_1{ b}_{12} + \cdots + x_\ell{ b}_{\ell2}} \cdots u_k ^ { { b}_{0k} + x_1{ b}_{1k} + \cdots + x_\ell{ b}_{\ell k}},
\end{equation}
  where $x_1, \ldots, x_\ell \in \N$. 
  Let us now define an indexed grammar $G$  such that $L=L(G)$. Let $G=(V, T, I, P, S)$, where 
  $$ V = \{S, {Y}, X_1, \ldots, X_k\}, \quad T=A,  \quad I = \{e, f_1, f_2, \ldots, f_\ell\},$$ and the set $P$ of productions   is the following:
  \begin{enumerate}
\item $P_{start}=(S \rightarrow {Y}e)$
\item For every $j=1,\ldots, \ell,$ $P_j=({Y} \rightarrow {Y}f_j)$
\item $Q=({Y}  \rightarrow X_1 X_2 \cdots X_k)$
\item For every $i=1, \ldots, k$ and for every $j=1, \ldots, \ell,$  $$R_{i0}=(X_i e \rightarrow u_i ^ {b_{0i}}), \quad R_{ij}=(X_i f_j \rightarrow u_i ^ {b_{ji}}X_i).$$

\end{enumerate}
Let us finally prove that $L=L(G)$ and $G$ is an uncontrolled grammar. 
 Let us first show  that $L\subseteq L(G)$. Let $w\in L$. By (\ref{eq-L-1}), there exist $x_1, \ldots, x_\ell\in \N$ such that
$$w=u_1 ^{{ b}_{01} + x_1{ b}_{11} + \cdots + x_\ell{ b}_{\ell1}}u_2 ^{{ b}_{02} + x_1{ b}_{12} + \cdots + x_\ell{ b}_{\ell2}} \cdots u_k ^ { { b}_{0k} + x_1{ b}_{1k} + \cdots + x_\ell{ b}_{\ell k}}.$$
Consider the derivation defined by the word over the alphabet $P$:
 $${\cal P} = P_{start} P_1 ^ {x_1}P_2 ^ {x_2}\cdots P_\ell ^ {x_\ell}Q Q_1\cdots Q_k,$$
where, for every $i=1,\ldots, k$,
$Q_i =R_{i\ell}^{x_\ell} \cdots R_{i2}^{x_2} R_{i1}^{x_1} R_{i0}.$
It is easily checked that  $S \Rightarrow _{\cal P} w$. Indeed,
\\
$S \Rightarrow _{P_{start}}  {Y}e  \Rightarrow _{P_1 ^ {x_1}P_2 ^ {x_2}\cdots P_\ell ^ {x_\ell}} Y f_\ell ^ {x_\ell} \cdots  f_1 ^ {x_1} e   
  \Rightarrow _  Q $\\
 $ X_1  f_\ell ^ {x_\ell} \cdots  f_1 ^ {x_1} e  \cdots X_k  f_\ell ^ {x_\ell} \cdots  f_1 ^ {x_1} e   \Rightarrow _  {Q_1} u_1 ^{{ b}_{01} + x_1{ b}_{11} + \cdots + x_\ell{ b}_{\ell1}}X_2  \cdots X_k  f_\ell ^ {x_\ell} \cdots  f_1 ^ {x_1} e  $\\\
 $\Rightarrow _  {Q_2} u_1 ^{{ b}_{01} + x_1{ b}_{11} + \cdots + x_\ell{ b}_{\ell1}}u_2 ^{{ b}_{02} + x_1{ b}_{12} + \cdots + x_\ell{ b}_{\ell2}}X_3  \cdots X_k 
 f_\ell ^ {x_\ell} \cdots  f_1 ^ {x_1} e    \Rightarrow _  {Q_3\cdots Q_k} w,$
 \medskip
 
 \noindent
  so that $w\in L(G)$.
Similarly, it can be shown that $L(G) \subseteq L$. Thus $L= L(G)$. Moreover, taking into account the form of the
productions of $G$, it is easily checked that the index of every derivation of $G$ is not larger than $k$. \qed

\end{proof}

Since it is known that in any semi-linear full trio, all bounded languages in the family are bounded Ginsburg semi-linear, the bounded languages in
$\UFIN$ coincide with several other families, including a deterministic machine model \cite{DLT2016}.
\begin{corollary}
The bounded languages in $\UFIN$ coincide with the following families of languages: bounded Ginsburg semi-linear languages, bounded languages 
in $\LL(\NCM), \LL(\DCM),$ $ \ETOLfin$, the class of string languages of simple (i.e., linear and non deleting)  tree grammars (see \cite{K2014}) 
and several other families listed in \cite{DLT2016}.
\end{corollary}

Also, since $\ILIN$ does not contain all bounded Ginsburg semi-linear languages by Theorem \ref{linearbounded}, but $\UFIN$ does, the following is immediate:
\begin{corollary}
The bounded languages in $\ILIN$ are strictly contained in the bounded
languages of $\UFIN$.

\end{corollary}

Next, a restriction of $\UFIN$ is studied and compared to the other families.
 And indeed, this family is quite general as it contains all bounded
Ginsburg semi-linear languages in addition to some languages that
are not in $\ETOLfin$.
 
Now let $p=(Af\rightarrow \nu)\in P$, with $f\in I \cup \{\lambda\}$, be a production. Then $p$ is called {\em special} if the number
 of occurrences of variables of $V$ in $\nu$ is at least $2$, and {\em linear}, otherwise. Denote by $P_{\cal S}$ and $P_{\cal L}$ the sets
 of special and linear productions of $P$ respectively.  By Definition \ref{eq-def-4}, a
 grammar  $G$ is uncontrolled finite-index if and only if the number of times special productions appear
  in every successful derivation of $G$ is upper bounded by a given fixed integer (not depending on the derivation). 
 
Next, we will deal with  uncontrolled grammars such that in every successful derivation of $G$, at most one special production occurs.
  The languages generated by such grammars form a family  denoted $\UFINONE$. It is worth noticing that a careful  rereading of the proof of 
 Theorem \ref{trios} and Lemma \ref{lm-0} shows that they hold for
 $\UFINONE$ as well. Further, it is clear that only one special production
 is used in every derivation of a word in the proof of Theorem \ref{boundedGinsburg}. Therefore, the following holds:
   \begin{theorem}\label{trio-restrict} 
  The family $\UFINONE$  is a union-closed full trio and it contains all bounded Ginsburg semi-linear languages. 
\end{theorem}
It is immediate from the definitions that
$\ILIN \subseteq \UFINONE \subseteq \UFIN$. Further, since $\UFINONE$
contains all bounded Ginsburg semi-linear languages by Theorem
\ref{trio-restrict}, but the linear indexed languages do not, by
Theorem \ref{linearbounded}, the following holds:
\begin{theorem}
$\ILIN \subset \UFINONE \subseteq \UFIN$.
\end{theorem}

Then the following is true from \cite{DLT2016}.
 \begin{corollary}
$\UFINONE$ is a semi-linear full trio containing all bounded Ginsburg semi-linear languages. Further, the bounded languages in $\UFINONE, \ \UFIN, \ \LL(\NCM), \ \LL(\DCM),$ and $ \ETOLfin$ all coincide, (also with several others listed in \cite{DLT2016}).
\end{corollary}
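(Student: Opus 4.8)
The plan is to assemble this corollary entirely from results already established, since each ingredient is in hand. First I would dispatch the semilinearity of $\UFINONE$, which is the only genuinely new claim in the first sentence. By the proposition immediately preceding the corollary we have the inclusion $\UFINONE \subseteq \UFIN$, and by the theorem of \cite{ICALP2015} the family $\UFIN$ is semilinear; hence every language of $\UFINONE$, being a language of $\UFIN$, has a semilinear Parikh image, so $\UFINONE$ is semilinear. Combining this with Proposition \ref{trio-restrict}, which already records that $\UFINONE$ is a (union-closed) full trio containing all bounded Ginsburg semilinear languages, gives the first sentence verbatim: $\UFINONE$ is a semilinear full trio containing all bounded Ginsburg semilinear languages.

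For the second sentence I would invoke the characterization from \cite{DLT2016}, cited earlier in the paper, that in every semilinear full trio all bounded languages in the family are bounded Ginsburg semilinear. Applying this to $\UFINONE$ shows that every bounded language in $\UFINONE$ is bounded Ginsburg semilinear, while the reverse containment is exactly the containment asserted in Proposition \ref{trio-restrict}. These two inclusions together force the bounded languages in $\UFINONE$ to be precisely the bounded Ginsburg semilinear languages. To reach the stated coincidence with the other families, I would observe that each of $\UFIN$, $\LL(\NCM)$, $\LL(\DCM)$, and $\ETOLfin$ is likewise a semilinear full trio containing all bounded Ginsburg semilinear languages (the case of $\UFIN$ via \cite{ICALP2015}, the remaining families being standard and treated in \cite{DLT2016}), so the same two-way argument pins the bounded fragment of each one to the bounded Ginsburg semilinear languages. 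Consequently all of these bounded fragments agree with that of $\UFINONE$.

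Since the proof is a direct composition of Proposition \ref{trio-restrict}, the inclusion $\UFINONE \subseteq \UFIN$, the theorem of \cite{ICALP2015}, and the characterization of \cite{DLT2016}, there is no substantive obstacle to overcome. The only step deserving care is the derivation of semilinearity: it must be obtained by passing to the superfamily $\UFIN$ rather than attempted directly on $\UFINONE$, since no independent pumping or structural analysis of single-special-production derivations is needed once the containment in a semilinear family is noted.
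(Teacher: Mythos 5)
Your proposal is correct and follows essentially the same route the paper intends: semilinearity of $\UFINONE$ via the containment $\UFINONE \subseteq \UFIN$ and the semilinearity of $\UFIN$ from \cite{ICALP2015}, the full-trio and bounded-Ginsburg-semilinear containment from Proposition \ref{trio-restrict}, and the \cite{DLT2016} characterization of bounded languages in semilinear full trios to obtain the coincidence with the other families. No gaps.
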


\section{Some Examples, Separation, and Decidability Results}\label{examples}

We start this section by giving an example that clarifies previous results.
{
  \begin{example}
  Let $L=\{a^nb^nc^n\$a^nb^nc^n : n\in \N\}$. If $\varphi : \N^7 \rightarrow a^*b^*c^*\$^*a^*b^*c^*$, then
  $L= \varphi (B)$, where $B = \{{\bf b}_0 + n {\bf b}_1 : n\in \N\}$, with
  ${\bf b}_0 = (0,0,0,1,0,0,0)$ and  ${\bf b}_1 = (1,1,1,0,1,1,1)$. It is worth noticing that, by the discussion preceding Theorem 
  \ref{linearbounded}, $L$ is not a linear
  indexed language.     
  We define an uncontrolled finite-index grammar
  $G = (V, T, I, P, S)$ where $V= \{S, Y, X_1,X_2,X_3,X_4,$ $X_5,X_6,X_7 \}$, $T= \{a, b, c, \$\},$ $I= \{e, f\},$  and the set $P$ of productions is:
 $$
 P_{start}=S \rightarrow {Y}e, \,
  P={Y} \rightarrow {Y}f, \, Q={Y}  \rightarrow X_1 X_2 \cdots X_7$$
 \[
\begin{array}{lllllll}
   X_1f \rightarrow a X_1 & X_2f \rightarrow b X_2 & X_3f \rightarrow cX_3 & X_4f \rightarrow  X_4    
  & X_5f \rightarrow a X_5 & X_6f \rightarrow b X_6  \\
   X_7f \rightarrow c X_7  &  X_1e \rightarrow \lambda & X_2 e \rightarrow \lambda & X_3 e \rightarrow \lambda & X_4 e \rightarrow  \$   
  & X_5 e \rightarrow \lambda \\
  X_6 e \rightarrow \lambda & X_7 e \rightarrow \lambda . & & & & & 
\end{array}
\]
For an arbitrary derivation, we get 
$$S \, \Rightarrow\,  Ye \,  \Rightarrow ^n \, Y f ^n e 
=X_1 f ^n e X_2 f ^n e X_3 f ^n e X_4 f ^n e X_5 f ^n e X_6 f ^n e X_7 f ^n e\,    \Rightarrow ^* \, a^nb^nc^n\$a^nb^nc^n.$$
As the only freedom in derivations of $G$ consists  of how many times the rule $P$ is applied and of trivial variations in order to perform the rules
$X_i f  \rightarrow \sigma X_i, \sigma \in T \cup \{\varepsilon\}$, it should be clear that $L = L(G)$.
  \end{example} }

It is known that decidability of several properties holds for semi-linear trios where the properties are effective \cite{CIAA2016}.
This is the case for $\UFIN$, and also for $\ILIN$ \cite{DP}.
\begin{corollary}
Containment, equality, membership, and emptiness are decidable for bounded languages in $\UFIN$ and $\ILIN$.
\end{corollary}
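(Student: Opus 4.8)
The plan is to reduce each of the four problems to an equivalent question about effectively computable semilinear subsets of $\N^k$, where decidability follows from Presburger arithmetic. The starting point is that both $\UFIN$ and $\ILIN$ are semilinear full trios (Theorem \ref{trios} and \cite{DP}) and, crucially, that all the underlying data are effective: the closures under morphism, inverse morphism, and intersection with regular sets yield grammars effectively, and for any such grammar $G$ one can effectively construct the semilinear set $\psi(L(G))$. Given these ingredients, the corollary is an instance of the general decidability transfer for effective semilinear full trios established in \cite{CIAA2016,DLT2016}; what follows sketches the reduction.

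First I would dispatch emptiness and membership. Emptiness is immediate, since $L(G)=\emptyset$ if and only if the effectively computable semilinear set $\psi(L(G))$ is empty, which is decidable. For membership of a word $w$, note that $L(G)$ is bounded, say $L(G)\subseteq u_1^*\cdots u_k^*$; one first tests the regular condition $w\in u_1^*\cdots u_k^*$, and if it holds determines the finitely many coordinate tuples $\bar n\in\N^k$ with $\varphi(\bar n)=w$. Membership then reduces to testing these tuples against the coordinate set $\Phi(L(G))$ constructed below.

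The heart of the matter is containment, with equality being two containments. Given bounded languages $L_1,L_2$ in the family, I would first place them over a common frame $u_1^*\cdots u_k^*$, an effective step for bounded languages. Using the full-trio closures, I then compute for each $L_i$ the coordinate set $\Phi(L_i)=\{\bar n\in\N^k : \varphi(\bar n)\in L_i\}$: taking the morphism $h$ with $h(a_j)=u_j$, the language $h^{-1}(L_i)\cap a_1^*\cdots a_k^*$ belongs to the family, and by effective semilinearity its Parikh image, which is exactly $\Phi(L_i)$, is an effectively computable semilinear set. Since $\Phi(L_i)$ is defined through membership in $L_i$ and $L_i\subseteq u_1^*\cdots u_k^*$, one has $\varphi(\Phi(L_i))=L_i$, and the equivalence $L_1\subseteq L_2 \iff \Phi(L_1)\subseteq\Phi(L_2)$ holds. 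The latter is a containment of semilinear sets and hence decidable.

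The main obstacle is the potential non-injectivity of the Ginsburg map $\varphi$: distinct tuples may spell the same word, so one cannot naively equate language containment with containment of arbitrary semilinear representatives. Defining $\Phi(L_i)$ directly via membership circumvents this, because it makes $\Phi(L_i)$ automatically closed under the ambiguity relation $\{(\bar n,\bar m): \varphi(\bar n)=\varphi(\bar m)\}$, which is itself Presburger-definable. The remaining work is routine: checking that the common-frame construction and the coordinate-set computation are effective for $\UFIN$ and $\ILIN$, which follows from the effectiveness of their full-trio closures and Parikh maps, and then invoking the decidability of Presburger arithmetic.
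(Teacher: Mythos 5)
Your proposal is correct and matches the paper's approach: the paper proves this corollary purely by invoking the general decidability transfer for bounded languages in effective semilinear full trios from the cited work (\cite{CIAA2016}), together with the facts that $\UFIN$ and $\ILIN$ are such trios. You simply unfold the proof of that cited result (reduction to containment of the effectively computable coordinate sets $\Phi(L_i)$ via the inverse-morphism trick, handling the non-injectivity of the Ginsburg map correctly), which is exactly the argument the paper relies on.
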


Lastly, it is known that $\ETOLfin$ cannot generate some context-free languages \cite{Rozoy}, but all context-free languages can be generated by indexed linear grammars by Theorem \ref{thm-A}, which are all in $\UFINONE$.
\begin{corollary}
\label{notETOL}
There are languages in $\UFINONE$ and $\ILIN$ that are not in $\ETOLfin$.
\end{corollary}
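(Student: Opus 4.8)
The plan is to exhibit a single language that simultaneously belongs to $\ILIN$ and $\UFINONE$ but lies outside $\ETOLfin$, and to obtain it essentially for free from the separation result cited immediately before the statement. First I would invoke Rozoy's theorem \cite{Rozoy}: there is a context-free language $L_0$ with $L_0 \notin \ETOLfin$. This is the only genuinely nontrivial ingredient, and it is external to the present development.

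Next I would locate $L_0$ inside the indexed hierarchy. By Theorem \ref{thm-A}, the context-free language $L_0$ is generated by a right linear indexed grammar $G$; since the right-hand side $\nu$ of every production of form (1) or (3) of such a $G$ has the shape $\nu = u$ or $\nu = uB$ with $u \in T^*$ and $B \in V$, the grammar $G$ carries at most one variable on each right-hand side and is therefore a linear indexed grammar. Hence $L_0 \in \ILIN$. Applying the inclusion $\ILIN \subseteq \UFINONE$ established just above then gives $L_0 \in \UFINONE$ as well, so the same $L_0$ serves as a witness for both families.

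The hard part is not mathematical but a matter of checking that the grammar supplied by Theorem \ref{thm-A} really meets the two definitions at issue. For $\ILIN$ this is immediate from the stated form of $\nu$. For $\UFINONE$ one observes that no production of a right linear indexed grammar is \emph{special}: every right-hand side contains at most one occurrence of a variable (the productions of form (1)/(3) have zero or one, while those of form (2), $A \to Bf$, have exactly one), so no successful derivation ever uses even a single special production, which is strictly stronger than what $\UFINONE$ demands. Thus $L_0 \in \ILIN \cap \UFINONE$ while $L_0 \notin \ETOLfin$, which proves the corollary; moreover this exhibits concretely that the inclusion $\LL(\CFL) \subseteq \UFINONE$ is witnessed by languages escaping $\ETOLfin$.
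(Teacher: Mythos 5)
Your argument is correct and is essentially the paper's own: both invoke Rozoy's result that some context-free language lies outside $\ETOLfin$, place that language in $\ILIN$ via Theorem \ref{thm-A}, and then pass to $\UFINONE$ through the inclusion $\ILIN \subseteq \UFINONE$. Your extra verification that right linear indexed grammars use no special productions is a harmless elaboration of why that inclusion holds.
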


  We provide an example of language in $\FIN$ whose Parikh image is not  a semi-linear set.

  \begin{example}\label{ex:1}
  We construct a grammar of index $3$, which is not uncontrolled,  that generates the language $L = \{aba^2 b \cdots a^n b a^{n+1} \ : \ n\geq 1\}$.
  Let $G = (V, T, I, P, S)$ be the grammar where $V = \{S, A, B, X, X', X''\}$, $T = \{a,b\}$, $I = \{e, f, g\}$, and the set of productions of $G$ are defined as:
  \begin{itemize}
\item $p_0=S \rightarrow Xe, \quad p_1=X \rightarrow ABX'f, \quad p_2= X' \rightarrow X, \quad p_3=X' \rightarrow X'',$
\item $p_4=X''f \rightarrow aX'', \quad p_5=X''e \rightarrow  a, \quad p_6=Af \rightarrow aA, \quad p_7=Ae \rightarrow\lambda,$
\item $p_8=Bf \rightarrow B, \quad p_9=Be \rightarrow b.$
 \end{itemize}
 One can check that  $G$ satisfies the properties mentioned above.
 
Let $G'$ be the grammar obtained from $G$ by replacing the production $p_9$ above with $(Be \rightarrow \lambda).$
Then one verifies that $G'$ is a grammar of index $3$ generating the unary language 
$\{a ^{n(n+1)/2} \ : \ n\geq 2\}$, that is not bounded Ginsburg semi-linear.
\end{example}

%
%

From Example \ref{ex:1} we get

\begin{corollary}
There are languages in $\FIN$ that are not semi-linear. Furthermore, there are
bounded (and unary) languages in $\FIN$ that are not bounded
Ginsburg semi-linear.
\label{notsemilinear}
\end{corollary}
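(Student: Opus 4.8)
The plan is to prove Corollary \ref{notsemilinear} by exhibiting a concrete language in $\FIN$ whose Parikh image is not semi-linear, and then modifying it to obtain a bounded (indeed unary) example. The natural starting point is Example \ref{ex:1}, which provides an index-$3$ grammar $G$ generating $L = \{aba^2b\cdots a^nba^{n+1} : n \geq 1\}$. First I would verify that $\psi(L)$ is not semi-linear: a word in $L$ with parameter $n$ contains exactly $n$ occurrences of $b$ and exactly $1 + 2 + \cdots + (n+1) = \binom{n+2}{2}$ occurrences of $a$. Thus $\psi(L) = \{(\binom{n+2}{2}, n) : n \geq 1\} \subseteq \N^2$, and since the number of $a$'s grows quadratically in the number of $b$'s, this set cannot be a finite union of linear sets (any linear set would force an eventually affine, hence linear-in-$n$, relationship between the two coordinates). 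This establishes the first sentence of the corollary, namely that $\FIN$ is not a semi-linear family.

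For the bounded and unary part, I would apply the full-trio closure properties of $\FIN$ established in Theorem \ref{trios} together with Lemma \ref{lm:2-trans} (closure under rational transductions). The cleanest route is to map $L$ onto a unary language via the length morphism $h : \{a,b\}^* \rightarrow \{a\}^*$ sending both $a \mapsto a$ and $b \mapsto a$. Since $\FIN$ is closed under morphisms (Lemma \ref{lm:1}), $h(L) \in \FIN$. The word of $L$ with parameter $n$ has total length $\binom{n+2}{2} + n$, so $h(L) = \{a^{m(n)} : n \geq 1\}$ where $m(n) = \binom{n+2}{2} + n$ is a quadratic polynomial in $n$. This language is unary, hence trivially bounded (it sits inside $a^*$), and its Parikh image $\{m(n) : n \geq 1\} \subseteq \N$ is again non-semi-linear because the gaps $m(n+1) - m(n)$ grow without bound, whereas any semi-linear subset of $\N$ is eventually periodic with bounded gaps. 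Since every semi-linear full trio has the property that its bounded languages are bounded Ginsburg semi-linear (the result from \cite{DLT2016} cited before Proposition \ref{linearbounded}), a bounded language is bounded Ginsburg semi-linear only if its Parikh image is semi-linear; as $h(L)$ fails this, $h(L)$ is a bounded (unary) language in $\FIN$ that is not bounded Ginsburg semi-linear.

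I expect the main obstacle to be purely expository rather than mathematical: one must be careful that the two claims of the corollary are logically distinct and are each supported. The non-semi-linearity of $\psi(L)$ settles the first claim directly, but the second claim about bounded languages requires invoking the characterization that within a semi-linear full trio all bounded languages would be bounded Ginsburg semi-linear --- and the point is precisely that $\FIN$ is \emph{not} such a family, so a bounded language in $\FIN$ need not have semi-linear Parikh image. The only subtle verification is that $\{m(n) : n \geq 1\}$ genuinely fails semi-linearity; this follows because a semi-linear (equivalently, ultimately periodic) subset of $\N$ must have bounded consecutive gaps on its infinite tail, while $m(n+1) - m(n) = (n+2) + 1 \to \infty$. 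With $h(L)$ unary, boundedness is automatic, so no further transduction machinery beyond Lemma \ref{lm:1} is strictly needed, which keeps the argument short.
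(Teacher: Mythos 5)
Your proposal is correct and follows essentially the same route the paper intends: Corollary \ref{notsemilinear} is meant to follow from Example \ref{ex:1} (whose Parikh image is non-semilinear because of the quadratic relation between the $a$- and $b$-counts), with the bounded/unary instance obtained by projecting onto a one-letter alphabet via the morphism closure of Lemma \ref{lm:1}. Your verification of the unbounded-gaps argument for $\{m(n) : n\geq 1\}$ and the observation that a bounded Ginsburg semilinear language necessarily has a semilinear Parikh image correctly fill in the details the paper leaves implicit.
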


This allows for the separation of $\UFIN$ (which only contains semi-linear languages) and $\FIN$.
\begin{corollary}
$\CFL \subset \ILIN \subset \UFINONE \subseteq \UFIN \subset \FIN$.
\end{corollary}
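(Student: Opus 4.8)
The plan is to assemble the chain link by link from results already established, since the corollary is essentially a summary of the preceding development. I would dispatch the non-strict inclusions first, then supply a witness for each of the three strictness claims.

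For the inclusions I appeal directly to the definitions and to Theorem~\ref{thm-A}. The containment $\LL(\CFL) \subseteq \ILIN$ is given by Theorem~\ref{thm-A}, since a right linear indexed grammar is in particular a linear indexed grammar. A linear indexed grammar has at most one variable on each right-hand side, so it has no special production; hence every one of its derivations uses zero special productions, placing $\ILIN$ inside $\UFINONE$. Any grammar whose successful derivations use at most one special production is in particular uncontrolled finite-index, so $\UFINONE \subseteq \UFIN$, and uncontrolled finite-index trivially implies finite-index, giving $\UFIN \subseteq \FIN$.

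For strictness I would cite three witnesses. The language $L_2 = \{ww : w \in A^*\}$ lies in $\ILIN \setminus \LL(\CFL)$, so $\LL(\CFL) \subset \ILIN$. Combining Proposition~\ref{linearbounded} (a bounded Ginsburg semi-linear language outside $\ILIN$) with Proposition~\ref{trio-restrict} (every bounded Ginsburg semi-linear language lies in $\UFINONE$) yields a witness in $\UFINONE \setminus \ILIN$, so $\ILIN \subset \UFINONE$. Finally, Corollary~\ref{notsemilinear} produces a language in $\FIN$ with non-semi-linear Parikh image, while $\UFIN$ is a semi-linear family by the result of \cite{ICALP2015}; hence that language lies in $\FIN \setminus \UFIN$, giving $\UFIN \subset \FIN$.

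The single substantive ingredient---already carried out in Example~\ref{ex:1}---is the construction witnessing $\UFIN \subset \FIN$. The subtle point is that the index-$3$ grammar generating $\{aba^2b\cdots a^nba^{n+1} : n \geq 1\}$ is genuinely controlled: its finite-index bound holds only for \emph{some} successful derivation of each word, not for \emph{all} of them, which is exactly what permits it to escape semi-linearity while remaining finite-index. Because $\UFIN$ consists only of semi-linear languages, this language cannot be uncontrolled finite-index, and the separation follows. Every remaining link is immediate from the definitions or from the cited results, so the corollary follows by concatenating them.
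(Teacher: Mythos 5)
Your proof is correct and follows essentially the same route as the paper: the non-strict inclusions come from Theorem~\ref{thm-A} and the definitions, $\ILIN \subset \UFINONE$ comes from combining Proposition~\ref{linearbounded} with Proposition~\ref{trio-restrict}, and $\UFIN \subset \FIN$ comes from the non-semilinear language of Corollary~\ref{notsemilinear} together with the semilinearity of $\UFIN$ from \cite{ICALP2015}. No gaps.
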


  Finally, we show that all finite-index $\ETOL$ languages are finite-index indexed languages.
 \begin{theorem}
$\ETOLfin \subset \FIN$.
\end{theorem}

\begin{proof}
Strictness follows since $\FIN$ contains non-semi-linear languages
by Corollary \ref{notsemilinear}, however $\ETOLfin$ only
contains semi-linear languages \cite{RozenbergFiniteIndexETOL}.

We refer to \cite{RozenbergFiniteIndexETOL} for the formal definitions of $\ETOL$ systems and finite-index $\ETOL$ systems, which we will omit.

Let $G= (V,{\cal P}, S, T)$ be a $k$-index $\ETOL$ system. 
We can assume without loss of generality that $G$ 
is in so-called active-normal form, so that the set of active 
symbols of $V$ (those that can be changed by 
some production table) is equal to $V \setminus T$. 
Let ${\cal P} = \{f_1, \ldots, f_r\}$ be the set of production tables.
Then create an indexed grammar $G' = (V', T, I, P, S')$ where
$V' = (V \setminus T) \cup \{S'\}$, $S'$ is a new variable,
$I = \{f_1, \ldots, f_r\}$, and $P$ contains the following 
productions:
\begin{enumerate}
\item $S' \rightarrow S' f_i, \forall i, 1 \leq i \leq r,$
\item $S' \rightarrow S$,
\item $Bf_i \rightarrow \nu, \forall (B \rightarrow \nu) \in f_i, B \in V \setminus T$.
\end{enumerate}

Let $w \in L(G)$. Then $w_0 \Rightarrow_{f_{j_1}} w_1 \Rightarrow \cdots \Rightarrow_{f_{j_l}} w_l, w_0 = S, w_l = w$.
Let $w_i'$ be obtained from $w_i$ by placing 
$f_{j_{i+1}} \cdots f_{j_l}$ after each variable of $w_i$.

We will show by induction on $i$, $0 \leq i \leq l$, that
$S' \Rightarrow_{G'}^* w_i'$. Indeed, 
$S' \Rightarrow_{G'}^* Sf_{j_1} \cdots f_{j_l} = w_0'$, by using 
productions of type 1 followed by 2. Assume the inductive
hypothesis is true for some $i$, $0 \leq i < l$. Then 
$S' \Rightarrow_{G'}^* w_i'$. Then the next index on every variable of 
$w_i'$ is $f_{j_{i+1}}$. Applying the productions corresponding to those used in
the derivation $w_i \Rightarrow_{f_{j_{i+1}}} w_{i+1}$
in table $f_{j_{i+1}}$ on each variable of $w_i'$ one at a time from 
left-to-right created in 3.\ of the construction above, $w_{i+1}'$ is obtained.
It is also clear that if the original derivation is of index-$k$,
then the resulting derivation is of index-$2k$
(since the derivation of the indexed grammar proceeds sequentially
instead of in parallel, the number of variables of the indexed grammar could potentially be more than $k$, but it is always
less than the number of variables in the sentential form of the $\ETOL$
system plus the next sentential form).

Let $w \in L(G')$. Thus, $w_0 \Rightarrow_{p_1} w_1 \Rightarrow_{p_2}
\cdots \Rightarrow_{p_l} w_l$, where $S' = w_0$ and $w_l = w \in T^*$.
It should also be clear that we can assume without loss of generality
that this derivation proceeds by rewriting variables in a
``sweeping left-to-right'' manner. That is, if $w_i = w_i' B w_i''$
derives $w_{i+1}$ by rewriting variable $B$, then $w_{i+1}$
derives $w_{i+2}$ by rewriting the first variable of $w_i''$ if
it exists, and if not, the first variable of $w_{i+1}$.
Then one ``sweep'' of the variables by rewriting each variable is
similar to one rewriting step of an $\ETOL$ system. This is akin to a breadth first
traversal on the derivation tree of $w$.

By the construction, there exists $\alpha > 0 $ such that
$p_1, \ldots, p_{\alpha}$ are productions created in step 1,
$p_{\alpha+1}$ is created in step 2, and $p_{\alpha+2}, \ldots, p_l$
are created in step 3.
Let $\beta_1, \ldots, \beta_q$ be such that $\beta_1 = \alpha+2$, and the
derivation from $w_{\beta_i}$ is the start of the $i$th ``sweep''
from left-to-right,
and let $\beta_{q+1} = l$.
For $1 \leq i \leq q+1$, let $u_i$ be obtained from $w_{\beta_i}$
by removing all indices (so $u_{q+1} = w_l$).

We will show by induction that for all $i$, $1 \leq i \leq q+1$,
it is true that $S \Rightarrow_G^* u_i$, and all variables in $w_{\beta_i}$
are followed by the same index sequence. Indeed,
$w_{\beta_1} = w_{\alpha+2} = S\gamma$ for some $\gamma \in I^*$,
$u_1 = S$, and $S \Rightarrow_G^* u_1 = S$.
Assume that the inductive hypothesis holds for some $i$, 
$1 \leq i \leq q$. Then in $w_{\beta_i}$, all variables are
followed by the same index sequence. Let $f$ be the first index following
every variable. Then in the subderivation 
$w_{\beta_i} \Rightarrow_{p_{\beta_i}} \cdots \Rightarrow_{p_{\beta_{i+1}}} w_{\beta_{i+1}}$, because all productions applied were created in
step 3, they must all pop the first index, and since they all start
with the same index, they must all have been created from productions in the same table
$f$. It is clear that $u_i \Rightarrow_G u_{i+1}$ using production
table $f$. It is also immediate
that all variables in $w_{\beta_{i+1}}$ are followed by the
same sequence of indices. The proof follows.
\qed\end{proof}

It is an open question though as to how $\ETOLfin$
compares to $\UFIN$. For finite-index $\ETOL$, uncontrolled systems,
defined similarly to our definition of uncontrolled, is identical to finite-index $\ETOL$. Furthermore, it is known that $\ETOLfin$ is closed under Kleene-$*$ \cite{RozenbergFiniteIndexETOL}
and therefore contains $\{a^nb^n c^n : n>0\}^*$. But
we conjecture that this language is not in $\UFIN$ despite
being in $\FIN$ by the proposition above. This would imply that
$\UFIN$ is incomparable with $\ETOLfin$ by Corollary \ref{notETOL}.
Finally we observe that it would be interesting to know whether the inclusion
$ \UFINONE\subseteq \UFIN $ is strict or not. The examples presented in this paper would suggest that the two latter families could be equal.

We finally note that the class of linear indexed languages studied by Duske and Parchmann is a proper subset of the one studied by Gazdar and Vijayy-Shanker. 
An example of a language in the second class but not in the first, is the language $L = \{a^n b^n c^n \$ a^m b^m c^m \mid n,m \geq 0\}$
which appears in the remarks following Theorem \ref{First-hierarchy}. It might then also be interesting to study the class of Gazdar and Vijayy-Shanker in connection with finite index restrictions.

\bibliographystyle{splncs03}

\begin{thebibliography}{}
\expandafter\ifx\csname url\endcsname\relax
  \def\url#1{\texttt{#1}}\fi
\expandafter\ifx\csname urlprefix\endcsname\relax\def\urlprefix{URL }\fi
\expandafter\ifx\csname href\endcsname\relax
  \def\href#1#2{#2} \def\path#1{#1}\fi

\end{thebibliography}


\begin{thebibliography}{99}
 
 


 \bibitem{A} A. V. Aho, Indexed grammars---an extension of context-free grammars,
	 J. ACM,  	15 (4), 647--671, (1968).

 \bibitem{A1} A. V.  Aho, Nested stack automata, J. ACM,  	16, 383--406, (1969).

 \bibitem{BB1974} B. S. Baker, R. V. Book, Reversal-Bounded Multipushdown Machines,  
 J. Comput. Syst. Sci., 8, 315--332, (1974).



     \bibitem{Be}  J. Berstel,  {\em Transductions and Context-Free Languages},
   B.B. Teubner, Stuttgart, 1979.
   
    \bibitem{DP1989} J. Dassow,   Gh. P\v{a}un,
{\em Regulated Rewriting in Formal Language Theory,} 
EATCS Monographs on Theoretical Computer Science, 18, Springer-Verlag, Berlin, 1989.

 \bibitem{DP}    
  J. Duske, R. Parchmann, Linear indexed grammars, Theoret. Comput. Sci. 32, 
47--60, (1984). 

 \bibitem{EIMShuffle}
J. Eremondi, O. H. Ibarra, and I. McQuillan, On the Complexity and Decidability of Some
Problems Involving Shuffle, Information and Computation, 259 (2),  214--224, (2018).


 
 
\bibitem{Gazdar1988} G. Gazdar,  Applicability of Indexed Grammars to Natural Languages,
pp. 69--94,   Springer Netherlands, Dordrecht, (1988).

 
   
 \bibitem{Gins} S. Ginsburg, {\em The Mathematical Theory of
Context-free Languages}, Mc Graw- Hill, New York, 1966.



\bibitem{C3.6} M. A. Harrison,  {\em Introduction to Formal Language Theory},  
Addison-Wesley Publishing Co., Reading, Mass., 1978. 



\bibitem{C4.8} 
J. E. Hopcroft, J. D. Ullman, {\em  Introduction to  Automata Theory, 
Languages and Computation}, Addison-Wesley Publishing Co., Reading, Mass., 1979.
 
 
 

 \bibitem{Ibarra1978}
O. H. Ibarra, Reversal-bounded multicounter machines and their decision problems, J. ACM, 1 (25),   116--133,
 (1978).





\bibitem{CIAA2016}  O.H. Ibarra and I. McQuillan,
 On bounded semilinear languages, counter machines, and finite-index {ET0L}. In: Y. Han and K. Salomaa (eds.),
 Lecture Notes in Computer Science, 21st International Conference on Implementation and Application of Automata, CIAA 2016, Seoul, South Korea,
	vol. 9705, pp. 138--149, (2016).

 


 
\bibitem{DLT2016}  O.H. Ibarra and I. McQuillan,
 On families of full trios containing counter machine languages. In: S. Brlek and C. Reutenauer (eds.),
 Lecture Notes in Computer Science, 20th International Conference on Developments in Language Theory, DLT 2016, Montreal, Canada,
	vol. 9840, pp. 216--228, (2016).


\bibitem{K2014}  M. Kanazawa,
 A Generalization of Linear Indexed Grammars Equivalent to Simple Context-Free Tree
 Grammars. In: G. Morrill, R. Muskens, R. Osswald, F. Richter (eds.),
 Lecture Notes in Computer Science, 19th International Conference, FG 2014, T\"ubingen, Germany, 2014,
	vol. 8612, pp. 86--103, (2014).



\bibitem{RS} G. Rozenberg and A. Salomaa, 
	    {\em The Mathematical Theory of L Systems}, Academic Press, Inc., 
	 New York, 1980.



\bibitem{RozenbergFiniteIndexETOL}
 G. Rozenberg and D. Vermeir, On {ET0L} systems of finite index,
Information and Control, 38, 103--133, (1978).


\bibitem{RozenbergFiniteIndexGrammars}
 G. Rozenberg and D. Vermeir, On the effect of the finite index restriction on several families of grammars,
 Information and Control, 39, 284--302, (1978).



 \bibitem{Rozoy} B. Rozoy, The {Dyck} language {$D_1^{\prime *}$} is not generated by any matrix grammar of finite index, 
 Information and Computation 74 (1),  64--89, (1987).




 \bibitem{Vijay-Shanker1994} K. Vijay-Shanker and D. J. Weir, 
The equivalence of four extensions of context-free grammars,
Mathematical Systems Theory,  27 (6), 511--546, (1994).

\bibitem{ICALP2015}  G. Zetzsche,
An Approach to Computing Downward Closures. In: M.M. Halld{\'o}rsson and K. Iwama and N. Kobayashi and B. Speckmann (eds.),
 Automata, Languages, and Programming: 42nd International Colloquium, ICALP 2015, Kyoto, Japan, July 6-10, 2015, Proceedings, Part II,
	pp. 440--451, (2015).


 
\end{thebibliography}

\end{document}